\let\markeverypar\everypar
\newtoks\everypar
\newcommand{\polylog}{\text{\,polylog}}
\newcommand{\pred}[1]{pred(#1)}
\begin{document}

\title{Submatrix Maximum Queries in Monge Matrices\\ are
Equivalent to Predecessor Search\thanks{This paper is based on two preliminary papers that appeared in ICALP 2014 and ICALP 2015.}}

\author{
Pawe{\l} Gawrychowski\inst{1} \and Shay Mozes\inst{2}\thanks{Mozes and Weimann supported in part by
  Israel Science Foundation grant 794/13.} \and Oren Weimann\inst{3}$^{\star\star}$
}
\institute{
University of Wrocław, \href{mailto:gawry@cs.uni.wroc.pl}{gawry@cs.uni.wroc.pl} \and
IDC Herzliya, \href{mailto:smozes@idc.ac.il}{smozes@idc.ac.il}  \and
University of Haifa, \href{mailto:oren@cs.haifa.ac.il}{oren@cs.haifa.ac.il}  
}

\date{}
\maketitle

\begin{abstract}
We present an optimal data structure for submatrix maximum
queries in $n\times n$ Monge matrices.
Our result is a two-way reduction showing that the problem is
equivalent to the classical predecessor problem in a universe of polynomial size. This gives a data
structure of $O(n)$ space that answers submatrix maximum queries in
$O(\log\log n)$ time, as well as a matching lower bound, showing that $O(\log\log n)$ query-time is optimal for any data structure
of size $O(n\polylog(n))$. 
Our result settles the problem, improving on the $O(\log^2 n)$
query-time in SODA'12, and on the $O(\log n)$ query-time in ICALP'14. 

In addition, we show that partial Monge matrices can be handled in the same bounds as full Monge matrices. In both previous results, partial Monge matrices incurred additional inverse-Ackermann factors.
\end{abstract}

\section{Introduction}
Data structures for range queries and for predecessor queries are
among the most studied data structures in computer science. Given an
$n \times n$ matrix $M$,  a {\em range maximum} (also called submatrix
maximum) data structure can report the maximum entry in any query
submatrix (a set of consecutive rows and a set of consecutive columns)
of $M$. Given a set $S\subseteq [0,U)$ of $n$ integers from a
polynomial universe $U$, 
a {\em predecessor} data structure can report the predecessor (and successor) in $S$ of any query integer $x\in  [0,U)$. 
In this paper, we prove that these two seemingly unrelated problems are in fact equivalent when the matrix $M$ is a {\em Monge} matrix.

\paragraph{\bf Range maximum queries.}
A long line of research over the last three decades including~\cite{AFL07,CR1989,DemaineLandauWeimann,GBT84,YuanA10} achieved range maximum data structures of $\tilde O(n^2)$ space and $\tilde O(1)$ query time\footnote{The $\tilde O(\cdot)$ notation
  hides polylogarithmic factors in $n$.}, culminating with the $O(n^2)$-space  $O(1)$-query data structure of Yuan and Atallah~\cite{YuanA10}. In general matrices, this is optimal since representing the input matrix already
requires $\Theta(n^2)$ space. In fact, reducing the additional space to $O(n^2/c)$ is known to incur an $ \Omega(c)$ query-time~\cite{BrodalDR10} and such tradeoffs can indeed be achieved for any value of $c$~\cite{BrodalESA,BrodalDR10}. 

However, in many applications, the matrix $M$ is not stored explicitly
but any entry of $M$ can be computed when needed in $O(1)$ time. One
such case is when the matrix $M$ is sparse, i.e.,  has 
 $N = o(n^2)$ nonzero entries. In this case 
the problem is known in computational geometry as the {\em orthogonal range searching} problem  on the $n \times n$ grid. In this case as well, various data structures with $\tilde O(N)$-space and $\tilde O(1)$-query appear in a long history of results including
~\cite{AlstrupEtAl,Patrascu,Chazelle88,Munro,GBT84}. 
For a survey on orthogonal range searching see~\cite{Nekrich}. 
Another case where the additional space can be made $o(n^2)$ (and in fact even $O(n)$) is when the matrix is a Monge matrix.

\paragraph{\bf  Range maximum queries in Monge matrices.}
A matrix $M$ is {\em Monge} if for any pair of rows $i<j$ and columns
$k<\ell$ we have that $M[i,k]+ M[j,\ell] \ge M[i,\ell]+
M[j,k]$.\footnote{Monge matrices are often defined with a $\le$
  (rather than $\ge$) in the condition. Our results apply to both
  definitions, as well as to minimum (rather than maximum) queries.}
 A matrix $M$ is {\em Totally Monotone} (or TM) if for any pair of
 rows $i<j$ and columns $k<\ell$ we have that if $M[i,k]\le M[j,k]$
 then $M[i,\ell] \le M[j,\ell]$.  Notice that the Monge property
 implies total monotonicity but the converse is not true. Whenever
 possible, we state our results for the more general class of TM
 matrices. Throughout the paper we use a top-down and left-to-right ordering of the
 elements of a matrix. We say that $M[i,k]$ is above $M[j,\ell]$ if
 $i<j$, and to the left  of $M[j,\ell]$ if $k<\ell$.
 
Submatrix maximum queries on Monge matrices 
have various important applications in combinatorial optimization and computational
geometry 
such as problems involving distances in the plane, and in problems on
convex $n$-gons. 
See~\cite{BKR96} for a survey on Monge matrices and their uses in combinatorial
optimization. Submatrix maximum queries on Monge matrices are
used in algorithms that efficiently find the largest empty rectangle containing a
query point, in dynamic distance oracles for planar graphs,
and in algorithms for maximum flow in planar graphs.
See~\cite{KaplanMozesNussbaumSharir} for more details on the
history of this problem and its applications.

Given an $n \times n$ Monge matrix $M$ it is possible to obtain compact data structures of only $\tilde O(n)$ space that can answer submatrix maximum queries in $\tilde{O}(1)$ time.  The first such data structure was given by 
Kaplan, Mozes, Nussbaum and
Sharir~\cite{KaplanMozesNussbaumSharir}. 
They presented an $O(n\log n)$-space data structure with  $O(\log^2
n)$ query time.
This was improved in~\cite{ourICALP}  to $O(n)$ space and $O(\log n)$ query time.  

\paragraph{\bf Breakpoints and Partial Monge matrices.}
Given an $m\times n$ Monge matrix $M$, let $r(c)$ be the row containing the maximum element in the $c$-th
column of $M$. It is easy to verify that the $r(\cdot)$ values are monotone, i.e., $r(1)\leq r(2) \leq\ldots\leq r(n)$.
Columns $c$ such that $r(c-1)<r(c)$ (or $c=1$) are called the {\em breakpoints} of $M$. A Monge matrix
consisting of $m<n$ rows has $O(m)$ breakpoints, which can be found in
$O(n)$ time using the SMAWK algorithm~\cite{SMAWK} (total monotonicity
suffices for SMAWK). 

Some applications involve  {\em partial} Monge matrices rather than
full Monge matrices.
A partial matrix is a matrix where some of the
entries are undefined, but the defined entries in each row
and in each column are contiguous. A partial Monge matrix is a partial matrix in which the Monge inequality is satisfied whenever all four involved entries are defined. The total number of breakpoints in a partial Monge matrix is still $O(m)$ (as we show in Section~\ref{sec:partial}), and they can be found in $O(n\cdot \alpha(n))$ time\footnote{Here $\alpha(n)$ is the inverse-Ackermann function.} using an algorithm of Klawe and Kleitman~\cite{KK89}. This was used in~\cite{ourICALP,KaplanMozesNussbaumSharir} to extend their solutions to partial Monge matrices at the cost of an additional $\alpha(n)$ factor to the query time.\footnote{In~\cite{KaplanMozesNussbaumSharir}, there was also an additional $\log n$ factor to the space.}

\paragraph{\bf Our results.} 
In this paper, we fully resolve the submatrix maximum
query problem in $n\times n$ Monge matrices by presenting a data structure of $O(n)$ space and $O(\log\log n)$ query time. 
Consequently, we obtain an improved query time for other applications such as finding the largest empty
rectangle containing a query point. 
We compliment our upper bound with a matching lower bound, showing
that $O(\log\log n)$ query-time is optimal for any data structure of
size $O(n\polylog(n))$. 
Implicit in our upper and lower bound is an equivalence
between the predecessor problem in a universe of polynomial size and the range maximum query problem in
Monge matrices. The upper bound essentially reduces a submatrix
query to a constant number of predecessor problems, and vice versa, the lower bound
reduces the predecessor problem to a submatrix query problem. In fact, the lower bound holds even for the more restricted case where the submatrix query is a subcolumn.

Finally, we extend our result to partial Monge matrices with the exact
same bounds (i.e., $O(n)$ space and $O(\log\log n)$ query time). Our
result is the first to achieve such extension with no overhead.

\paragraph{\bf Techniques.} 
Let $M$ be an $n \times n$ Monge matrix\footnote{We consider $m \times n$ matrices, but for simplicity we sometimes state the
results for $n \times n$ matrices.}. Consider a full binary tree
$\mathcal T$ whose leaves are the rows of $M$. Let $M_u$ be the
submatrix of $M$ composed of all rows (i.e., leaves) in the subtree of
a node $u$ in $\mathcal T$.
Both existing data structures for submatrix maximum queries~\cite{ourICALP,KaplanMozesNussbaumSharir} store, for each node $u$ in $\mathcal T$ a data structure $D_u$. The goal of $D_u$ is to answer submatrix maximum queries  that include an arbitrary interval of columns and {\em exactly all rows}  of $M_u$.   
This way, an arbitrary query is covered
in~\cite{ourICALP,KaplanMozesNussbaumSharir} by querying the $D_u$ structures of $O(\log n)$ canonical nodes of $\mathcal T$.  An $\Omega(\log n)$ bound is thus inherent for any solution that examines the canonical nodes. 
We overcome this obstacle by designing a stronger data
structure $D_u$. Namely, one that supports  queries
that include an arbitrary interval of columns and {\em a prefix of
  rows} or {\em a suffix of rows} of $M_u$. 
This way, an arbitrary query can be covered by just two $D_u$s. The
idea behind the new design  is to efficiently  encode the changes in column maxima as we add rows to $M_u$ one by one. Retrieving this information is done using weighted ancestor search and range maximum queries on trees. This is a novel use of these techniques.     

For our lower bound, we show that for any set of $n$ integers
$S\subseteq [0,n^{2})$ there exists an $n\times n$ Monge matrix $M$ such that the predecessor of $x$ in $S$
can be found with submatrix maximum queries on $M$. 
The predecessor lower bound of
P{\v{a}}tra{\c{s}}cu and Thorup~\cite{PT2006} then implies that $O(n \polylog(n))$ space requires  $\Omega(\log\log n)$ query time. 
We overcome two technical
difficulties here: First, $M$ should be Monge. Second, there must be an  $O(n\polylog(n))$-size representation of $M$ which can retrieve any entry $M[i,j]$ in $O(1)$ time. 

Finally, for handling partial Monge matrices, and unlike previous
solutions for this case, we do not directly adapt the solution for the
full Monge case to partial Monge matrices. Instead we decompose the partial Monge matrix into
many full Monge matrices, that can be preprocessed to be queried
cumulatively in an efficient way. This requires significant technical work and
careful use of the structure of the decomposition.

\paragraph{\bf Computational model.}
We assume the standard word RAM model with word size $\Omega(\log n)$. However,
this is just an internal assumption and the elements of the matrix $M$ are only accessed
through a comparison oracle, that is, we only assume that we are able to check in constant
time if $M[i,j] \leq M[i',j']$ and no arithmetical manipulation on the elements of $M$ is 
performed.

\paragraph{\bf Roadmap.}
In Section~\ref{sec:structure} we present an $O(n\log n)$-space data structure for Monge matrices that answers submatrix maximum queries in $O(\log \log n)$ time. In Section~\ref{sec:linear} we reduce the space to $O(n)$. Our lower bound is given in Section~\ref{sec:lower bound}, and the extension to partial Monge matrices in Section~\ref{sec:partial}.

\section{Data structure for Monge matrices}
\label{sec:structure}
Our goal in this section is to construct, for a given $m\times n$  Monge matrix $M$, a data structure
of size $O(m\log n)$ that answers submatrix maximum queries in $O(\log\log n)$ time. In Section~\ref{sec:linear}
we show how to reduce the space from $O(n\log n)$ to $O(n)$ when $m=n$.
We will actually show a stronger result, namely the structure allows us to reduce in $O(1)$ time
a submatrix maximum query into $O(1)$ predecessor queries on a set consisting of $n$ integers from
a polynomial universe.

We denote by $\pred{m,n}$ the complexity of a predecessor query on a set of $m$ integers from
a universe $\{0,\ldots,n-1\}$. It is well known
that there are $O(m)$-space data structures achieving $\pred{m,n} = \min\{O(\log m),O(\log\log n)\}$.

Recall
that a submatrix maximum query returns the maximum $M[i,j]$ over all $i\in [i_{0},i_{1}]$ and
$j\in [j_{0},j_{1}]$ for given $i_{0}\leq i_{1}$ and $j_{0}\leq
j_{1}$. We start by answering the easier 
 \emph{subcolumn maximum queries} within these space and time bounds. 
That is,  finding the maximum $M[i,j]$ over all $i\in [i_{0},i_{1}]$ for given
$i_{0}\leq i_{1}$ and $j$.

We construct a full binary tree $\mathcal T$ over the rows of $M$. Every leaf of the tree corresponds to a single row
of $M$, and every inner node corresponds to the range of rows in its subtree. To find the maximum $M[i,j]$ over all
$i\in [i_{0},i_{1}]$ for  given $i_{0}\leq i_{1}$ and $j$, we first
locate the lowest common ancestor (lca) $u$ of the leaves
corresponding to $i_{0}$ and $i_{1}$ in the tree. Then we decompose the query into two parts:
one fully within the range of rows $M_{\ell}$ of the left child of $u$, and one fully within
the range of rows $M_{r}$ of the right child of $u$. The former ends at the last row of $M_{\ell}$
and the latter starts at the first row of $M_{r}$. We equip every node with two data structures
supporting  such simpler subcolumn maximum queries. Because of symmetry (if $M$ is
Monge, so is $M'$, where $M'[i,j]=M[n+1-i,n+1-j]$) it suffices to show how to answer
subcolumn maximum queries starting at the first row. 

\begin{lemma}
\label{lem:subcolumn}
Given an $m\times n$ TM matrix $M$, a data structure of size $O(m)$ can be constructed
in $O(m\log n)$ time to answer in $O(\pred{m,n})$ time subcolumn maximum queries starting at the first row of $M$.
\end{lemma}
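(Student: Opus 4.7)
The plan is to reduce each subcolumn query $(i_1,j)$ to a single weighted ancestor query in a tree $T$ built online by a Monge-driven sweep. I process rows $r=1,\dots,m$ in order, maintaining a stack of pairs $(r_i,s^*_{r_i})$ with strictly increasing $s^*$. When pushing row $r$, I repeatedly compare against the current top $q$: by the Monge property $c\mapsto M[r,c]-M[q,c]$ is non-decreasing in $c$, so one $O(\log n)$ binary search locates the smallest $c_0$ at which $r$ beats $q$; if $c_0\le s^*_q$ then $r$ dominates $q$ throughout $q$'s remaining interval, so pop $q$ and repeat, otherwise set $s^*_r:=c_0$ and stop (if the stack empties in the process, set $s^*_r:=1$). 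Finally push $r$ and attach it in $T$ as a child of the node now at the top (or as a new root if the stack is empty). Each row is popped at most once, so the total number of binary searches across the sweep is $O(m)$ and the build cost is $O(m\log n)$; $T$ has exactly $m$ nodes, each labeled with a weight $s^*_r\in[1,n]$.

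Two invariants, both proved by induction on $r$, drive the reduction. First, along every root-to-leaf path in $T$ the weights are strictly increasing: whenever $r$ is pushed, its parent is the surviving top, which has $s^*<s^*_r$ by the stop condition of the sweep. Second, after pushing row $i_1$ the current stack coincides with the root-to-$i_1$ path in $T$: pushing $i_1$ truncates the old stack (which by induction equals the root-to-$(i_1-1)$ path) to an ancestor $q$ of the old top and then hangs $i_1$ as a new child of $q$, producing the root-to-$i_1$ path. These invariants imply that at time $i_1$ the columns are partitioned into intervals $[s^*_{r_i},s^*_{r_{i+1}}-1]$ determined by consecutive stack elements (with the bottom starting at column $1$ and the top extending to $n$), so the row that owns column $j$ is the topmost stack element with $s^*\le j$, equivalently the deepest ancestor of node $i_1$ in $T$ whose weight is at most $j$.

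This last quantity is a standard weighted ancestor query on $T$, made especially easy by the monotonicity of weights along root-to-leaf paths: the ancestors of $i_1$, read from root downward, form a sorted sequence of weights in $[1,n]$. It can be answered in $O(\pred{m,n})$ time using an $O(m)$-space structure, for instance via heavy-path decomposition together with a per-path predecessor table and the usual micro/macro trick, or by invoking Kopelowitz and Lewenstein's weighted ancestor data structure. Once the row $r$ is returned, the answer $M[r,j]$ is reported in $O(1)$. The main obstacle is nailing down the two invariants, and in particular the corner case in which a push empties the stack so that $r$ becomes the root of a new component of $T$; the precise order of parent-assignment when a single push triggers multiple pops must also be handled with care so that every popped node retains the parent it was given at the moment it was pushed.
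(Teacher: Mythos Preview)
Your approach is essentially the same as the paper's: both process rows incrementally while maintaining a stack that encodes the current upper envelope, record the stack history as a tree, and answer each prefix-subcolumn query by a single weighted ancestor query, with each push costing one $O(\log n)$ binary search. The only cosmetic difference is that the paper labels tree nodes by breakpoint \emph{columns} and stores a pointer $s(i)$ per row, whereas you label nodes by \emph{rows} with the starting column as weight; these are dual descriptions of the same structure, and both yield $O(m)$ space, $O(m\log n)$ construction, and $O(\pred{m,n})$ query time.
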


\begin{proof}
Consider queries spanning an {\em entire} column $c$ of $M$. To answer such a query, we only need to find
the corresponding $r(c)$. If we store the breakpoints of $M$ in a predecessor structure, where
every breakpoint $c$ links to its corresponding value of $r(c)$, a query can be answered
with a single predecessor search. More precisely, to determine the maximum in the $c$-th
column of $M$, we locate the largest breakpoint $c' \leq c$, and then set $r(c)=r(c')$.
Hence we can construct a data structure of size $O(m)$ to answer {\em entire column} maximum
queries in $O(\pred{m,n})$ time.

Let $M_{i}$ be a TM matrix consisting of the first $i$ rows of $M$. By applying the above
reasoning to every $M_{i}$ separately, we immediately get a structure of size $O(m^{2})$ answering
subcolumn maximum queries starting at the first row of $M$ in $O(\pred{m,n})$ time. We
want to improve on this by utilizing the dependency of the structures constructed for different $i$'s. Observe that the list of breakpoints of $M_{i+1}$ is a prefix
of the list of breakpoints of $M_{i}$ to which we append at most one new element. In other words,
if the breakpoints of $M_i$ are stored on a stack, we need to pop zero or more elements and push
at most one new element to represent the breakpoints of $M_{i+1}$. Consequently, instead of storing a separate list for every $M_{i}$,
we can succinctly describe the content of all stacks with a single tree $T$ on at most $m+1$
nodes. For every $i$, we store a pointer to a node $s(i)\in T$, such that the ancestors of $s(i)$
(except for the root) are exactly the breakpoints of $M_{i}$.
Whenever we pop an element from the current stack, we move to the parent of the current
node, and whenever we push an element, we create a new node and make it a child of the
current node. Initially, the tree consists of just the root. Every node is labelled with a column
number and by construction these numbers are strictly increasing on any path starting at
the root (the root is labelled with $-\infty$). 
Therefore, a predecessor search for $j$ among the 
breakpoints of $M_{i}$ reduces to finding the leafmost ancestor of $s(i)$ whose label is at most $j$. 
This is known as the {\em weighted ancestor} problem.
Weighted ancestor queries on a tree of size $O(m)$
are equivalent to predecessor searching on a number of sets of $O(m)$
total size~\cite{Kopelowitz},\footnote{The reduction described in~\cite{Kopelowitz}
needs $O(\log^{*}m)$ additional time and (adaptively) queries two sets. The additional
time is required to reduce the total size of the sets to $O(m)$, which is done
by recursively decomposing the tree. However, this recursive decomposition
can be avoided using atomic heaps as explained in Lemma 11 of~\cite{GawrychowskiLN14}.
Then in $O(1)$ additional time we are able to reduce a weighted ancestor query
to a single predecessor query in one of the sets.}
achieving the claimed space and query time bounds.

To finish the proof, we need to bound the construction time. The bottleneck is constructing
the tree $T$. Let $c_{1}< c_{2}<\ldots <c_{k}$ for some $k\le i$ be the breakpoints of $M_{i}$. As long as
$M[i+1,c_{k}]\geq M[r(c_{k}),c_{k}]$ we decrease $k$ by one, i.e., remove the last breakpoint.
This process is repeated $O(m)$ times in total. If $k=0$ we create a new breakpoint
$c_{1}=1$. 
If $k\geq 1$ and  $M[i+1,c_{k}] <M[r(c_{k}),c_{k}]$, we check if $M[i+1,n] \geq M[r(c_{k}),n]$.
If so, we need to create a new breakpoint. To this end, we need to find the smallest $j$ such
that $M[i+1,j] \geq M[r(c_{k}),j]$. This can be done in
$O(\log n)$ using binary search. Consequently, $T$ can be constructed in $O(m\log n)$ time. Then augmenting
it with a weighted ancestor structure takes $O(m)$ time.
\qed \end{proof}

We apply Lemma~\ref{lem:subcolumn} twice to every node of the full version tree $\mathcal T$.
Once for subcolumn maximum queries starting at the first row and once for queries ending at the last row. Since the total
size of all structures at the same level of the tree is $O(m)$, the total size of our subcolumn
maximum data structure becomes $O(m\log m)$, and it can be constructed in $O(m\log m\log n)$
time to answer queries in $O(\pred{m,n})$ time. Hence we have proved
the following.

\begin{theorem}
\label{thm:subcolumn}
Given an $m\times n$ TM matrix $M$, a data structure of size $O(m\log m)$ can be constructed
in $O(m\log m\log n)$ time to answer subcolumn maximum queries in $O(\pred{m,n})$ time.
\end{theorem}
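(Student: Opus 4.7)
The plan is a standard balanced binary tree decomposition combined with Lemma~\ref{lem:subcolumn}. I build a full binary tree $\mathcal{T}$ over the $m$ rows of $M$: leaves are single rows in their natural order, and each internal node $u$ corresponds to the contiguous row range $M_u$ formed by its descendant leaves. At every node $u$ I install two instances of the structure from Lemma~\ref{lem:subcolumn}: one built on $M_u$ itself, answering subcolumn maximum queries whose row range starts at the first row of $M_u$; and one built on the reflected matrix $M_u'[i,j]=M_u[|M_u|+1-i,\,n+1-j]$, which, after the obvious relabeling $j\mapsto n+1-j$, answers subcolumn maximum queries whose row range ends at the last row of $M_u$. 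Reflection preserves the Monge property, so Lemma~\ref{lem:subcolumn} applies without modification.

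To answer a query with row range $[i_0,i_1]$ on column $j$, I first compute the LCA $u$ of the leaves for $i_0$ and $i_1$ in $\mathcal{T}$; since $\mathcal{T}$ is a static perfect binary tree this is $O(1)$. Letting $\ell,r$ be $u$'s children, the row range $[i_0,i_1]$ splits as a suffix of $M_\ell$ ending at its last row together with a prefix of $M_r$ starting at its first row. These are precisely the two cases handled by the structures stored at $\ell$ and $r$, so the query becomes two invocations of Lemma~\ref{lem:subcolumn}, each of cost $O(\pred{|M_\ell|,n})$ and $O(\pred{|M_r|,n})$ respectively. Both are $O(\pred{m,n})$, yielding the claimed query time.

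For the space and construction bounds I use the usual level-sum argument. The row ranges $\{M_u\}$ at any single level of $\mathcal{T}$ partition the $m$ rows of $M$, so Lemma~\ref{lem:subcolumn} guarantees that the structures at that level have total size $O(m)$ and are built in $O(m\log n)$ time. Summing over the $O(\log m)$ levels of $\mathcal{T}$ gives $O(m\log m)$ total space and $O(m\log m\log n)$ construction time.

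There is not really a single hard step here; the content is a clean decomposition on top of Lemma~\ref{lem:subcolumn}. The only items that need verification are that the reflection $M_u\mapsto M_u'$ preserves Mongeness, and that $\pred{|M_u|,n}\le \pred{m,n}$ so that the query bound is uniform across nodes. Both are immediate from the definitions, so no new idea beyond the tree decomposition is required.
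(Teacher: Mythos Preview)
Your proposal is correct and matches the paper's proof essentially line for line: the paper also builds a full binary tree over the rows, applies Lemma~\ref{lem:subcolumn} twice at each node (once directly for prefixes and once via the reflection $M'[i,j]=M[m+1-i,n+1-j]$ for suffixes), decomposes a query at the LCA into a suffix of the left child and a prefix of the right child, and obtains the space and time bounds by the same per-level summation.
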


By symmetry (a transpose of a Monge matrix is Monge) we can answer subrow maximum
queries (where the query is a single row and a range of columns) in
$O(\pred{n,m})$ time. We are now ready to
tackle general submatrix maximum queries.

At a high level, the idea is identical to the one used for subcolumn maximum queries: we construct
a~full binary tree $\mathcal T$ over the rows of $M$, where every node corresponds to a range of rows. To
find maximum $M[i,j]$ over all $i\in [i_{0},i_{1}]$ and
$j\in [j_{0},j_{1}]$ for given $i_{0}\leq i_{1}$ and $j_{0}\leq j_{1}$, we locate the lowest
common ancestor of the leaves corresponding to $i_{0}$ and $i_{1}$ and decompose
the query into two parts, the former ending at the last row of $M_{\ell}$ and the latter
starting at the first row of $M_{r}$. Every node is equipped with two data structures
allowing us to answer submatrix maximum queries starting at the first row or ending
at the last row. As before, it suffices to show how to answer submatrix maximum queries starting
at the first row.

\begin{lemma}
\label{lem:submatrix}
Given an $m\times n$ Monge matrix $M$, and a data structure that
answers subrow maximum queries on $M$ in $O(\pred{n,m})$ time, one can
construct in $O(m\log m)$ time  a data structure consuming $O(m)$
additional space, that answers submatrix maximum queries starting at
the first row of $M$ in $O(\pred{m,n}+\pred{n,m})$ time.
\end{lemma}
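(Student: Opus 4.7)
The plan is to augment the tree $T$ from Lemma~\ref{lem:subcolumn} so that a submatrix-maximum query on the first $i_{1}$ rows and columns $[j_{0},j_{1}]$ reduces to $O(1)$ weighted-ancestor queries, one level-ancestor query, a constant number of subrow-maximum queries on $M$, and one path-maximum query on $T$.

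Preprocessing. Recall that the ancestors of $s(i_{1})$ in $T$ (excluding the root) are exactly the breakpoints of $M_{i_{1}}$, and that the row $r_{v}$ at which node $v$ was created is stored together with the column label $c_{v}$. For each non-root node $v$ of $T$ with parent $u$, define $\nu_{v}=\max_{c\in[c_{u},c_{v}-1]} M[r_{u},c]$. Informally, $\nu_{v}$ is the contribution of the breakpoint interval ending just before $c_{v}$ in any stack in which $u$ is followed by $v$. Since $|T|=O(m)$, all $\nu_{v}$'s can be obtained with $O(m)$ subrow-maximum queries on $M$ in total time $O(m\cdot \pred{n,m})=O(m\log m)$ and $O(m)$ space. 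On top of $T$ we also build: the weighted-ancestor structure on the column labels from Lemma~\ref{lem:subcolumn}; a standard $O(m)$-space, $O(1)$-query level-ancestor structure; and an $O(m)$-space, $O(1)$-query ancestor-descendant path-maximum structure over the $\nu_{v}$-labels.

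Answering a query. Given $([1,i_{1}],[j_{0},j_{1}])$, let $s=s(i_{1})$. Two weighted-ancestor queries on $s$ with thresholds $j_{0}$ and $j_{1}$ return $v_{t_{0}}$ and $v_{t_{1}}$, the deepest ancestors of $s$ whose column labels are at most $j_{0}$ and $j_{1}$, respectively. If $v_{t_{0}}=v_{t_{1}}$, the whole range lies in one breakpoint interval and the answer is a single subrow maximum of row $r_{v_{t_{0}}}$ over $[j_{0},j_{1}]$. Otherwise, a level-ancestor query at depth $\mathrm{depth}(v_{t_{0}})+1$ from $s$ yields $v_{t_{0}+1}$, giving us $c_{v_{t_{0}+1}}$. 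The answer is the maximum of (a) the subrow maximum of row $r_{v_{t_{0}}}$ over $[j_{0},c_{v_{t_{0}+1}}-1]$, (b) the subrow maximum of row $r_{v_{t_{1}}}$ over $[c_{v_{t_{1}}},j_{1}]$, and (c), when $t_{1}\geq t_{0}+2$, the path maximum of $\nu_{v}$ over the ancestors of $s$ strictly between $v_{t_{0}}$ and $v_{t_{1}}$ (i.e., from $v_{t_{0}+2}$ to $v_{t_{1}}$). Correctness follows from the identity $\nu_{v_{t+1}}=\max_{c\in[c_{v_{t}},c_{v_{t+1}}-1]}M[r_{v_{t}},c]$, so (c) is precisely the maximum over all breakpoint intervals of $M_{i_{1}}$ wholly contained in $[j_{0},j_{1}]$, while (a) and (b) handle the two partially overlapping boundary intervals.

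Complexity and main obstacle. Query time is $O(\pred{m,n})$ for the weighted-ancestor queries, $O(1)$ each for the level-ancestor and path-maximum queries, and $O(\pred{n,m})$ for the at most three subrow maxima, for a total of $O(\pred{m,n}+\pred{n,m})$. The additional space is $O(m)$, and construction is dominated by the $m$ subrow-maximum computations of the $\nu_{v}$'s, giving $O(m\log m)$. The main technical point is the invocation of an $O(m)$-space, $O(1)$-query ancestor-descendant path-maximum structure on the tree $T$ to support step (c); minor care is also needed to handle degenerate configurations such as $v_{t_{1}}=s(i_{1})$ or an empty middle range in (c), both of which follow directly from the tree structure.
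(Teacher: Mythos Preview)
Your proof is correct and essentially identical to the paper's own argument: both augment the tree $T$ of Lemma~\ref{lem:subcolumn} with the values $\nu_{v}=\max_{c\in[c_{u},c_{v}-1]}M[r_{u},c]$ (the paper calls them $m_{k}$), build a path-maximum structure over them, and answer a query by two weighted-ancestor lookups for the boundary breakpoints, two subrow-maximum queries for the partial intervals, and one path-maximum query for the fully contained intervals. The only cosmetic difference is that you make the level-ancestor step explicit to pass from the predecessor of $j_{0}$ to its child on the path to $s(i_{1})$, whereas the paper subsumes this into saying the weighted-ancestor structure also returns the successor $c_{i}$; both are equivalent and cost $O(1)$.
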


\begin{proof}
We extend the proof of Lemma~\ref{lem:subcolumn}.  Let $c_{1} < c_{2} < \ldots <c_{k}$
be the breakpoints of $M$ stored in a predecessor structure. For every $i \geq 2$
we precompute and store the value  $$m_{i}=\max_{j\in [c_{i-1},c_{i})} M[r(c_{i-1}),j].$$
These values are
augmented with a (one dimensional) range maximum query data structure. 
To begin with, consider a submatrix maximum query starting at the first row of $M$ and ending
at the last row of $M$, i.e., we need to calculate the maximum $M[i,j]$ over all $i\in [1,m]$ and $j\in [j_{0},j_{1}]$. 
We find in $O(\pred{m,n})$ the successor of $j_{0}$, denoted $c_{i}$, and the predecessor of $j_{1}$, denoted
$c_{i'}$. There are  three possibilities:
\begin{enumerate}
\item The maximum is reached for $j\in [j_{0},c_{i})$,
\item The maximum is reached for $j\in [c_{i},c_{i'})$,
\item The maximum is reached for $j\in [c_{i'},j_{1})$.
\end{enumerate}
The first and the third possibilities can be calculated with subrow maximum queries in $O(\pred{n,m})$ time, because
both ranges span an interval of columns and a single row.
The second possibility can be calculated with a range maximum query on the range $(i,i']$ over
the precomputed values $m_i$ associated to the breakpoints.
Consequently, we can construct a data structure of size $O(m)$ to answer such submatrix
maximum queries in $O(\pred{m,n}+\pred{n,m})$ time.

The above solution can be generalized to queries that start at the first row of $M$ but do not necessarily end at the last row of $M$. This is done
by considering the Monge matrices $M_{i}$ consisting
of the first $i$ rows of $M$. For every such matrix, we need a predecessor structure
storing all of its breakpoints, and additionally a range maximum structure over
their associated values $m_i$. Hence now we need to construct a similar
tree $T$ as in Lemma~\ref{lem:subcolumn} on $O(m)$ nodes,
but now every node has both a weight and a value. The weight of a node is the column number
of the corresponding breakpoint $c_{k}$, and the value is its $m_{k}$ (or undefined if $k=1$).
As in Lemma~\ref{lem:subcolumn}, the breakpoints of $M_{i}$ are exactly the ancestors of the node $s(i)$. Note that
every $m_{k}$ is defined in terms of $c_{k-1}$ and $c_{k}$, but this is not a problem because
the predecessor of a breakpoint does not change during the whole construction.
We maintain a weighted ancestor structure
using the weights (in order to find $c_{i}$ and $c_{i'}$ in $O(\pred{m,n})$ time), and a {\em generalized range maximum structure} using the values. A generalized range maximum structure of a tree $T$, given two query nodes $u$ and $v$, returns the maximum value on the unique $u$-to-$v$
path in $T$. It can be implemented
in $O(m)$ space and $O(1)$ query time after $O(m\log m)$ preprocessing~\cite{DemaineLandauWeimann}
once we have the values. The values can be computed with subrow maximum queries
in $O(m \cdot \pred{n,m}) = O(m\log m)$ total time.
\qed \end{proof}

By applying Lemma~\ref{lem:submatrix} twice to every node of the full binary tree $\mathcal T$,
we construct in $O(m\log ^2 m)$ time a data structure of size $O(m\log m)$ to
answer submatrix maximum queries in $O(\pred{m,n}+\pred{n,m})$ time. In
order to apply Lemma~\ref{lem:submatrix} to a node of $\mathcal T$ we
need a subrow maximum query data structure for the corresponding rows
of the matrix $M$. Note, however, that a single subrow maximum query
data structure for $M$ can be used for all nodes of $\mathcal T$. We
thus obtained the following theorem.

\begin{theorem}
\label{thm:submatrix}
Given an $m\times n$ Monge matrix $M$, and a data structure 
answering subrow maximum queries on $M$ in $O(\pred{n,m})$ time, one can
construct in $O(m\log ^2 m)$ time  a data structure
taking $O(m \log m)$
additional space, that answers submatrix maximum queries on $M$ in $O(\pred{m,n}+\pred{n,m})$ time.
\end{theorem}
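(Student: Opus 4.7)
The plan is to lift Lemma~\ref{lem:submatrix} to arbitrary row ranges by the same binary-tree trick already used to pass from Lemma~\ref{lem:subcolumn} to Theorem~\ref{thm:subcolumn}. Build a full binary tree $\mathcal T$ whose leaves are the rows of $M$, so that each internal node $u$ corresponds to a contiguous range of rows forming a submatrix $M_u$; the Monge property is inherited by any set of rows, so $M_u$ is Monge. At every $u$ I install two instances of the structure from Lemma~\ref{lem:submatrix}: one that answers queries starting at the first row of $M_u$, and a second one for queries ending at the last row of $M_u$, obtained by applying the lemma to the reflected matrix $M'_u[i,j]=M_u[|M_u|{+}1{-}i,n{+}1{-}j]$ which is again Monge.

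To answer a query with rows $[i_0,i_1]$ and columns $[j_0,j_1]$, I compute in $O(1)$ time the lowest common ancestor $u$ of the leaves for $i_0$ and $i_1$, split the row range at $u$'s pivot into a prefix ending at the last row of the left child's submatrix $M_\ell$ and a suffix starting at the first row of the right child's submatrix $M_r$, invoke the matching structure at each child, and return the larger of the two answers. Each invocation costs $O(\pred{m,n}+\pred{n,m})$ by Lemma~\ref{lem:submatrix}, so the total query time matches the claim.

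For space, at every level of $\mathcal T$ the submatrices $M_u$ partition the rows of $M$, so the total number of rows across a single level is $m$; since Lemma~\ref{lem:submatrix} uses $O(|M_u|)$ space at $u$, each of the $O(\log m)$ levels contributes $O(m)$, giving $O(m\log m)$ overall. Construction at $u$ takes $O(|M_u|\log |M_u|)=O(|M_u|\log m)$ time, and summed over all nodes this is $O(m\log^2 m)$.

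The one point that needs care, and which is the only real obstacle hiding in this otherwise routine argument, is the subrow data structure that Lemma~\ref{lem:submatrix} consumes at every node. Naively one would want a separate subrow structure for each $M_u$, which would blow up both the space and the preprocessing by a $\log m$ factor and defeat the bounds. The observation that saves us is that a subrow query on $M_u$ is literally a subrow query on $M$ (a single row of $M$ over a column interval), so the single subrow structure given in the hypothesis can be shared by all $O(m)$ nodes of $\mathcal T$ at no additional cost, as already noted in the paragraph preceding the theorem. With this sharing in place, the three promised bounds follow directly.
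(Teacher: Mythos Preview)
Your proposal is correct and follows essentially the same route as the paper: build the full binary tree over rows, install the two (first-row/last-row) instances of Lemma~\ref{lem:submatrix} at each node, answer a query by splitting at the LCA, and share the single global subrow structure across all nodes. Your explicit accounting of the $O(m\log m)$ space and $O(m\log^2 m)$ construction time, and your emphasis on the shared subrow structure, mirror exactly what the paper does in the paragraph proving this theorem.
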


By combining Theorem~\ref{thm:subcolumn} with Theorem~\ref{thm:submatrix},
given an $n\times n$ Monge matrix $M$, a data structure of size $O(n\log n)$ can be
constructed in $O(n\log^2 n)$ time to answer submatrix maximum queries in
$O(\pred{n,n})$ time.

\section{Obtaining linear space}
\label{sec:linear}
In this section we show how to decrease the space of the data structure presented in Section~\ref{sec:structure} to be linear.
We extend the idea developed in our previous paper~\cite{ourICALP}.
The previous linear space solution was based on partitioning the matrix  $M$ into $n/x$ matrices $M_{1},M_{2},\ldots,M_{n/x}$, where each $M_{i}$ is a {\em slice} of $M$ consisting of $x=\log n$ consecutive rows. 
Then, instead of working with the matrix $M$, we worked with the $(n/x)\times n$ matrix $M'$, where $M'[i,j]$ is the maximum entry in the $j$-th column of $M_{i}$.

\paragraph{\bf Subcolumn queries.}
Consider a subcolumn query. Suppose the query is entirely contained in some $M_{i}$. This means it spans less than $x=\log n$ rows. In~\cite{ourICALP}, since the desired query time was $O(\log n)$, a query  simply inspected all elements of the subcolumn. In our case however, since the desired query time is only $O(\log \log n)$, 
we apply the above partitioning scheme twice. We explain this now.

We start with the following lemma, that provides an efficient data structure for  queries consisting of a single column and {\em all} rows in rectangular matrices.


\begin{lemma}[the micro data structure] \label{lemma:micro}
Given an $x\times n$ TM matrix and $r >0$, one can construct in $O(x \log n /\log r)$ time, a data structure of size $O(x)$ that given a query column can report the maximum entry in the entire column in 
$O(r+pred(x,n))$ time. 
\end{lemma}
\begin{proof}
Out of all $n$ columns of the input matrix $M$, we will designate
$O(x)$ columns as {\em special} columns. For each of these special columns we will eventually compute its maximum element.  
The first $x$ special columns of $M$ are  columns $1, n/x, 2n/x, 3n/x,\ldots, n$ and are denoted $j_1,\ldots, j_x$. 

Let $X$ denote the $x \times x$ submatrix obtained by taking all $x$
rows but only the $x$ special columns $j_1,\ldots, j_x$. It is easy to verify that $X$
is TM. We can therefore run the SMAWK
algorithm~\cite{SMAWK} on $X$ in  $O(x)$ time and obtain the column
maxima of all special columns. 
Let $r(j)$ denote the row containing the maximum element in column $j$~\footnote{We assume that no elements of the matrix are equal. The ties are resolved lexicographically.}
Since $M$ is TM, the $r(j)$ values are monotonically
non-decreasing. Consequently, $r(j)$ of a non-special column $j$ must be
between $r(j_i)$ and $r(j_{i+1})$ where $j_i<j$ and $j_{i+1}>j$ are
the two special  columns bracketing $j$  
 (see Figure~\ref{fig}). 
 
\begin{figure}[h!]
   \centering
   \includegraphics[scale=0.4]{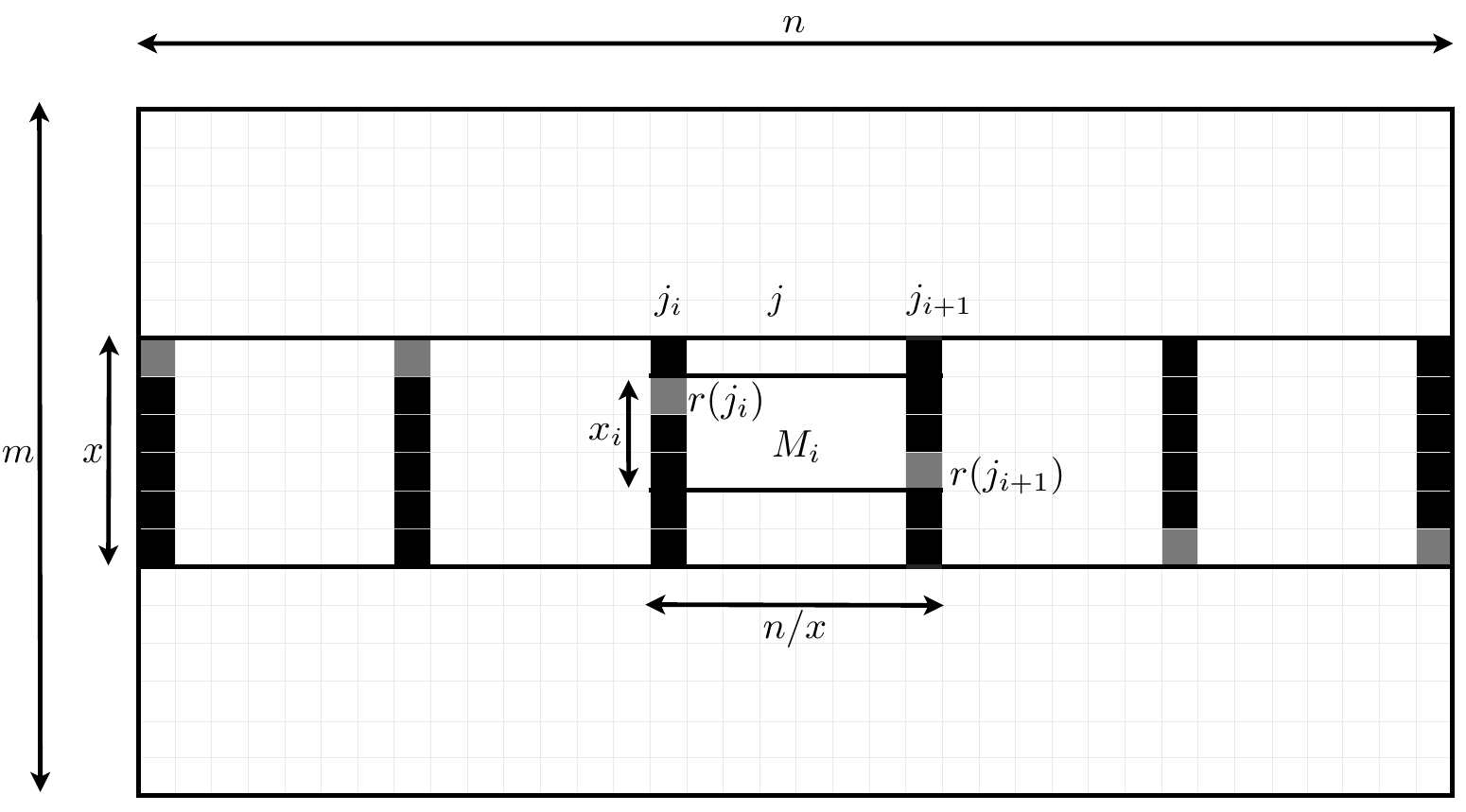}
   \caption{An $x\times n$ matrix inside an $m\times n$ matrix. The
     black columns are the first $x$ special columns. The
     (monotonically  non-decreasing) gray cells inside these special
     columns are the column maxima (i.e., the $r(j_i)$ values of breakpoints $j_i$). The maximum element of column $j$ in the $x\times n$ matrix must be between $r(j_i)$ and $r(j_{i+1})$ (i.e., in matrix $M_i$).}
  \label{fig}
 \end{figure}

For every $i$, let $x_i = r(j_{i+1}) - r(j_i)$. If $x_i  \le r$ then
{\em no} column between $j_i$ and $j_{i+1}$ will ever be a special
column. When we will query such a column $j$ we can simply check
(at query-time) the $r$ elements of $j$ between rows $r(j_i)$ and
$r(j_{i+1})$ in $O(r)$ time. If, however, $x_i  > r$, then we designate
more special columns between $j_i$ and $j_{i+1}$. This is done
recursively on the $x_i  \times (n/x)$ matrix $M_i$ composed of
rows $r(j_i),\ldots, r(j_{i+1})$ and columns $j_i,\ldots,
j_{i+1}$. That is, we mark $x_i$ evenly-spread columns of $M_i$ as
special columns,  and run SMAWK in $O(x_i)$ time on the $x_i \times
x_i$ submatrix $X_i$   obtained by taking all $x_i$ rows but only
these $x_i$ special columns. We continue recursively until either $x_i
\le r$ or the number of columns in $M_i$ is at most $r$. In the
latter case, before terminating, the recursive call runs SMAWK in $O(x_i+r)=O(x_i)$ time on the $x_i
\times r$ submatrix $X_i$ obtained by taking the $x_i$ rows and {\em
  all} columns of $M_i$ (i.e., all columns of $M_i$ will become
special). 

After the recursion terminates, every column $j$ of $M$ is either
special (in which case we computed its maximum), or its maximum  is
known to be in one of at most $r$ rows (these rows are specified by the $r(\cdot)$
values of the two special columns bracketing $j$). 
Let $s$ denote the total number of columns that are marked as special. 
We claim that $s = O(x \log n /\log r)$. To see this, notice that the
number of columns in every recursive call decreases by a factor of at
least $r$ and so the recursion depth is $O(\log_r n) = O(\log n /\log
r)$. In every recursive level, the number of added special columns is
$\sum x_i$ over all $x_i's$ in this level that are at least $r$. In
every recursive level, this sum is bounded by $2x$ because each one of
the $x$ rows of $M$ can appear in at most two $M_i$'s  (as the last
row of one and the first row of the other). Overall, we get $2x \cdot
O(\log n /\log r) = O(x \log n /\log r)$. 

Notice that $s = O(x \log n /\log r)$ implies that the total time
complexity of the above procedure is also $O(x \log n /\log r)$. This
is because whenever we run SMAWK on a $y \times y$ matrix it takes
$O(y)$ time and $y$ new columns are marked as special.  To complete
the construction, we go over the $s$ special columns from left to right in
$O(s)$ time and throw away (mark as non-special) any column whose
$r(\cdot)$ value is the same as that of the preceding special column. This
way we are left with only $O(x)$ special columns, and the difference in
$r(\cdot)$ between consecutive special columns is at least $1$ and at most
$r$. In fact, it is easy to maintain $O(x)$ (and not $O(s)$) space
{\em during} the construction by only recursing on sub matrices $M_i$
where $x_i >1$. We note that when $r=1$, the eventual special columns are
exactly the set of breakpoints of the input matrix $M$.

The final data structure is a predecessor data structure that holds
the $O(x)$ special columns and their associated $r(\cdot)$
values. Upon query of some column $j$, we search in $pred(x,n)$ time
for the predecessor and successor of $j$ and obtain the two $r(\cdot)$
values. We then  search for the maximum of column $j$ by explicitly
checking all the (at most $r$) relevant rows of column $j$. The query time is
therefore  $O(r+pred(x,n))$ and the space $O(x)$.  
\qed \end{proof}

In the case of $x = O(\log n)$, using atomic heaps~\cite{FredmanW94} (which support
predecessor searches in constant time) we obtain the following corollary:

\begin{corollary}\label{lem:micro}
Given an $x\times n$ TM matrix, a data structure of size $O(x)$ can be constructed in
$O(x \log n)$ time to answer entire-column maximum queries  in $O(1)$ time, if $x=O(\log n)$.
\end{corollary}

It is possible to use Lemma~\ref{lemma:micro} to obtain a subcolumn data
structure with faster $O(n \log n / \log\log n)$ preprocessing time,
at the cost of slower $O(\log n)$ query time (cf.~\cite[Lemma
2]{ourICALP}).  We next describe our new subcolumn data structure, which uses the above corollary and two applications of
the partitioning scheme.

\begin{theorem}
\label{thm:subcolumn2}
Given an $m\times n$ Monge matrix $M$, a data structure of size $O(m)$ can be constructed
in $O(m\log n)$ time to answer subcolumn maximum queries in $O(\log\log (n+m))$ time.
\end{theorem}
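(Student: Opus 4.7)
The plan is to apply the partitioning-and-condensation scheme of~\cite{ourICALP} \emph{twice}, in a nested fashion, so that every subcolumn query reduces to $O(1)$ applications of Theorem~\ref{thm:subcolumn} on condensed matrices plus one bounded-length scan. Choose $x=\Theta(\log(n+m))$ and $y=\Theta(\log\log(n+m))$.

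\emph{Top level.} Partition the rows of $M$ into $\lceil m/x\rceil$ consecutive \emph{slices} $M_1,\ldots,M_{\lceil m/x\rceil}$ of $x$ rows each, and form the $\lceil m/x\rceil \times n$ condensed matrix $M'$ with $M'[i,j]=\max_{k\in M_i} M[k,j]$. A short pairing argument (pick the witness rows for the large side of the Monge inequality in $M'$ and apply Monge-ness of $M$) shows that $M'$ is Monge. Equip each slice with Lemma~\ref{lem:micro} so that any $M'[i,j]$ is computable in $O(1)$ time; this uses $O(m)$ space and $O(m\log n)$ preprocessing in total. Apply Theorem~\ref{thm:subcolumn} to $M'$: with our choice of $x$ this yields a subcolumn structure on $M'$ of size $O((m/x)\log(m/x))=O(m)$ and query time $O(\pred{m/x,n})=O(\log\log(n+m))$, constructed in $O(m\log n)$ time.

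\emph{Bottom level.} Within each slice $M_i$ (an $x\times n$ Monge matrix), partition further into $\lceil x/y\rceil$ \emph{sub-slices} of $y$ rows each and form the analogous Monge matrix $N_i$ of size $\lceil x/y\rceil\times n$; its entries are accessible in $O(1)$ time via Lemma~\ref{lem:micro} on the sub-slices (we use $y=O(\log n)$ here). Apply Theorem~\ref{thm:subcolumn} to every $N_i$: per slice this costs $O((x/y)\log(x/y))$ space and gives query time $O(\pred{x/y,n})=O(\log\log(n+m))$. Since $\log(x/y)=\Theta(\log\log(n+m))=\Theta(y)$, the total space summed over all $\lceil m/x\rceil$ slices is $O((m/y)\log(x/y))=O(m)$, and the total construction time fits in $O(m\log n)$.

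\emph{Queries, and the main obstacle.} A subcolumn query over rows $[i_0,i_1]$ at column $j$ is split hierarchically: the whole slices strictly contained in $[i_0,i_1]$ are answered by a single subcolumn query on $M'$; each of the at most two partial slices at the endpoints is split further into a range of whole sub-slices (answered by the appropriate $N_i$-structure) plus at most two partial sub-slices, each spanning $\le y=O(\log\log(n+m))$ rows of $M$ and scanned directly using the $O(1)$-time oracle on $M$. All $O(1)$ sub-queries run in $O(\log\log(n+m))$ time, and the answer is the maximum of their results. The main obstacle is not conceptual but a bookkeeping one: verifying that both condensation steps preserve Monge-ness, and ensuring that Theorem~\ref{thm:subcolumn}'s construction-time bound is actually achievable when its input matrices ($M'$ and the $N_i$'s) are given only implicitly. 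This is exactly where Lemma~\ref{lem:micro} and the constraint $y=O(\log n)$ are essential: they deliver the $O(1)$-time entry access that the preprocessing of Theorem~\ref{thm:subcolumn} needs, without ever materializing $M'$ or the $N_i$'s.
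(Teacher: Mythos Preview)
Your proposal is correct and follows essentially the same approach as the paper: a two-level nested row partitioning (with block sizes $\Theta(\log)$ and $\Theta(\log\log)$), Lemma~\ref{lem:micro} to give $O(1)$ access to the condensed matrices, Theorem~\ref{thm:subcolumn} applied to the condensed matrices at both levels, and the natural hierarchical query decomposition with a brute-force scan at the bottom. The only differences are cosmetic --- you take $x=\Theta(\log(n+m))$ and $y=\Theta(\log\log(n+m))$ whereas the paper uses $\log m$ and $\log\log m$, and you name the inner condensed matrices $N_i$ rather than $M'_i$ --- but the structure, the space/time accounting, and the Monge-preservation argument are the same.
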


\begin{proof}
We first partition $M$ into $m/x$ matrices
$M_{1},M_{2},\ldots,M_{m/x}$, where $x=\log m$.
Every $M_{i}$ is a slice of $M$ consisting of $x$ consecutive rows. Next,  we partition
every $M_{i}$ into $x/x'$ matrices $M_{i,1},M_{i,2},\ldots,M_{i,x'}$, where $x'=\log\log m$.
Every $M_{i,j}$ is a slice of $M_{i}$ consisting of $x'$ consecutive rows (without loss of generality, assume that $x$ divides $m$ and $x'$ divides $x$).
Now we define a new $(m/x)\times n$ matrix $M'$, where $M'[i,j]$ is the maximum
entry in the $j$-th column of $M_{i}$. Similarly, for every $M_{i}$ we define
a new $(x/x')\times n$ matrix $M'_{i}$, where $M'_{i}[j,k]$ is the maximum entry
in the $k$-th column of $M_{i,j}$.

We apply Corollary~\ref{lem:micro} on every $M_{i}$ and $M_{i,j}$ in $O(m\log n)$ total time
and $O(m)$ total space, so that any $M'[i,j]$ or $M'_{i}[j,k]$ can be retrieved in $O(1)$ time. Furthermore,
it can be easily verified that $M'$ and all $M'_{i}$s are also Monge. 
To prove this, it is enough to argue that if $N$ is an $4\times 2$
Monge matrix,
the $2\times 2$ matrix $N'$ created by partitioning $N$ into two
slices, each consisting
of two rows, whose elements are the maxima in every column of each slice, is also
Monge. To this end, we need to compare:
$$N'[1,1]+N'[2,2]=\max(N[1,1],N[2,1])+\max(N[3,2],N[4,2])$$
and
$$N'[1,2]+N'[2,1]=\max(N[1,2],N[2,2])+ \max(N[3,1],N[4,1]).$$
Let $\max(N[1,2],N[2,2])=N[i,2]$, where $i\in \{1,2\}$, and similarly
$\max(N[3,1],N[4,1])=N[i',1]$, where $i'\in \{3,4\}$. Then
$$(N'[1,1]+N'[2,2]) - (N'[1,2]+N'[2,1])  \geq (N[i,1] + N[i',2]) - (N[i,2]+N[i',1])$$
which is at least $0$ because of $N$ being Monge.
  
Therefore, because $M'$ and all $M'_{i}$ are all Monge, and
by Corollary~\ref{lem:micro} their entries can be accessed in $O(1)$ time, 
we can apply Theorem~\ref{thm:subcolumn} on $M'$
and every $M'_{i}$. The total construction time is 
$O((m/x)\log(m/x)\log n+(m/x)(x/x')\log(x/x')\log n)=O(m\log n)$, 
and the total size of all structures constructed so far is 
$O((m/x)\log(m/x)+(m/x)(x/x')\log(x/x'))=O(m)$.

Now consider a subcolumn maximum query. If the range of rows is fully within a single
$M_{i,j}$, the query can be answered naively in $O(x')=O(\log\log m)$ time.
Otherwise, if the range of rows is fully within a single $M_{i}$, the query can be decomposed
into a prefix fully within some $M_{i,j}$, an infix corresponding to a range of rows
in $M'_{i}$, and a suffix fully within some $M_{i,j'}$. The maximum in the prefix and the suffix can
be computed naively in $O(x')=O(\log\log m)$ time, and the maximum in the infix
can be computed in $O(\log\log n)$ time using the structure constructed for $M'_{i}$.
Finally, if the range of rows starts inside some $M_{i}$ and ends inside another $M_{i'}$,
the query can be decomposed into two queries fully within $M_{i}$ and $M_{i'}$, respectively,
which can be processed in $O(\log\log n)$ time as explained before, and an infix
corresponding to a range of rows of $M'$. The maximum in the infix can be computed
in $O(\log\log n)$ time using the structure constructed for $M'$.
\qed \end{proof}

\paragraph{\bf Submatrix queries.}
We are ready to present the final version of our data structure. It is based on two
applications of the partitioning scheme, and an additional trick of transposing the matrix.

\begin{theorem}
\label{thm:submatrix2}
Given an $n\times n$ Monge matrix $M$, a data structure of size $O(n)$ can be constructed
in $O(n\log n)$ time to answer submatrix maximum queries in $O(\log\log n)$ time.
\end{theorem}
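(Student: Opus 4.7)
The plan is to follow the pattern of Theorem~\ref{thm:subcolumn2}: apply the partitioning scheme at two levels, plugging Theorem~\ref{thm:submatrix} in at each level to build the submatrix structure, and using transposition to supply the subrow-query primitive that Theorem~\ref{thm:submatrix} demands via Theorem~\ref{thm:subcolumn2}. The first partition reduces the effective row count from $n$ to $n/\log n$, so applying Theorem~\ref{thm:submatrix} at that level already fits in $O(n)$ space because $m\log m=O(n)$. The second partition handles queries confined to a single slice, bringing the per-slice cost down to $O(\log n)$ and hence $O(n)$ in total.

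Concretely, at the first level I set $x=\log n$, form the $(n/x)\times n$ column-slice-maxima matrix $M'$, and verify it is Monge using the same four-way argument as in Theorem~\ref{thm:subcolumn2}; Lemma~\ref{lem:micro} provides $O(1)$-time access to every entry after $O(n)$-space and $O(n\log n)$-time preprocessing. A subrow data structure on $M'$ is obtained by transposing $M'$ (a transpose of a Monge matrix is Monge) and running Theorem~\ref{thm:subcolumn2} on the transpose, yielding $O(n)$ space and $O(\log\log n)$ query time. Feeding this into Theorem~\ref{thm:submatrix} on $M'$ gives an $O((n/x)\log(n/x))=O(n)$-space data structure that in $O(\log\log n)$ time answers every submatrix query on $M$ whose row range spans at least one complete slice; what remains is a within-slice prefix piece and/or suffix piece in some slice $M_a$. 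At the second level, inside each $M_a$ (of size $x\times n$) I partition again with $x'=\log\log n$, form $M'_a$ of size $(x/x')\times n$, and apply Theorem~\ref{thm:submatrix} once more. Because now $m\log m=(x/x')\log(x/x')=O(\log n)$, each slice contributes $O(\log n)$ space, summing to $O(n)$ across the $n/x$ slices. To avoid paying $\Theta(n)$ per slice for the subrow primitive these applications require, I share a single subrow structure built by transposing the globally $x'$-sub-sliced matrix $M''$ of size $(n/x')\times n$ and running Theorem~\ref{thm:subcolumn2} once: every row of every $M'_a$ is literally a row of $M''$. Within-sub-slice prefix/suffix queries on a sub-slice $M_{a,a'}$ are handled by two applications of Lemma~\ref{lem:submatrix}, to $M_{a,a'}$ and to its row-reverse, contributing $O(x')$ per sub-slice and $O(n)$ overall.

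The hard part will be a query whose row range lies strictly inside a single sub-slice $M_{a,a'}$ of only $x'=\log\log n$ rows, since neither the $M'_a$-structure nor the two Lemma~\ref{lem:submatrix} structures apply and a naive decomposition into $x'$ subrow queries on $M$ would cost $O((\log\log n)^2)$. My plan is to recycle the transposition trick once more: the transposed sub-slice $M_{a,a'}^T$ has only $x'=O(\log n)$ columns, so atomic heaps~\cite{FredmanW94} make predecessor on its column set constant-time, exactly as in the innermost case of Theorem~\ref{thm:subcolumn2}; combined with the global subrow structure built by transposing $M$ and running Theorem~\ref{thm:subcolumn2}, this should shave the extra $\log\log n$ factor and deliver the claimed $O(\log\log n)$ query time. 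Construction time is dominated by the two runs of Theorem~\ref{thm:subcolumn2} at $O(n\log n)$ each and the two tiers of Theorem~\ref{thm:submatrix} at $O(m\log^2 m)$ each, all of which sum to $O(n\log n)$.
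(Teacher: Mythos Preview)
Your first two levels are sound and close to the paper's construction: the two-level row partition, Lemma~\ref{lem:micro} for $O(1)$ access to $M'$ and the $M'_i$'s, Theorem~\ref{thm:subcolumn2} on $(M')^T$ for subrow queries, and Theorem~\ref{thm:submatrix} on $M'$ all match. Your idea of building a single global matrix $M''$ of all $x'$-sub-slice column maxima and applying Theorem~\ref{thm:subcolumn2} to $(M'')^T$ so that every $M'_a$ can share one subrow oracle is a nice alternative to what the paper does at that level (the paper instead runs a ``partial'' Theorem~\ref{thm:submatrix} on each $M'_i$ that merely reports two candidate rows plus one entry, deferring the subrow work). Both variants fit in $O(n)$ space.

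The genuine gap is your ``hard part''. When the query's row range lies strictly inside one $M_{a,a'}$, none of your structures applies: your two Lemma~\ref{lem:submatrix} copies on $M_{a,a'}$ only handle ranges touching the first or last row, and you did not build the full binary tree (Theorem~\ref{thm:submatrix}) over $M_{a,a'}$ because that would cost $O(x'\log x')$ per sub-slice and $O(n\log\log\log n)$ overall. Your proposed fix---transpose $M_{a,a'}$ and exploit that it has only $x'$ columns via atomic heaps---does not yield a concrete structure: $M_{a,a'}^T$ has $n$ rows, so neither Lemma~\ref{lem:micro} nor Lemma~\ref{lem:subcolumn} can be afforded on it, and iterating over its $x'$ columns with one global subrow query each still costs $O((\log\log n)^2)$. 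Atomic heaps give $O(1)$ predecessor among $x'$ keys, but that alone does not turn a submatrix query into $O(1)$ subrow queries.

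The paper closes exactly this gap by abandoning the attempt to answer directly. It repeats the entire two-level preprocessing on the \emph{transpose} of $M$. On the row side the machinery narrows the answer to $O(\log\log n)$ candidate rows of $M$ (the rows of at most a constant number of sub-slices $M_{i,j}$); symmetrically, the column-side machinery narrows it to $O(\log\log n)$ candidate columns. The intersection is an $O(\log\log n)\times O(\log\log n)$ Monge submatrix, on which SMAWK finds the maximum in $O(\log\log n)$ time. This symmetric row/column reduction plus SMAWK is the missing idea in your proposal.
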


\begin{proof}
We partition $M$ as described in the proof of Theorem~\ref{thm:subcolumn2}, i.e., $M$ is
partitioned into $n/x$ matrices $M_{1},M_{2},\ldots,M_{n/x}$, where $x=\log n$, and every $M_{i}$
is then partitioned into $x/x'$ matrices $M_{i,1},M_{i,2},\ldots,M_{i,x'}$, where $x'=\log\log n$.
Then we define smaller Monge matrices $M'$ and $M'_{i}$, and provide $O(1)$ time access to their
entries with Corollary~\ref{lem:micro}. We apply
Theorem~\ref{thm:subcolumn2} to the transpose of $M'$ to get a subrow
maximum query data structure for $M'$. This takes $O(n)$ space and
$O(n\log n)$ time. With this data structure we can apply Theorem~\ref{thm:submatrix} on
$M'$, which takes an additional $O(\frac{n}{\log n} \log \frac{n}{\log
  n}) = O(n)$ space and $O(n \log n)$ time. 
We also apply Theorem~\ref{thm:subcolumn2} to the transpose of the
$\frac{n}{\log\log n}$-by-$n$
matrix obtained by stacking the $\frac{n}{\log n}$ $M'_i$ matrices. 
This takes $O(n)$ space and
$O(n\log n)$ time. This serves as a subrow maximum data structure for
each $M'_i$, so we can apply Theorem~\ref{thm:submatrix} to each
$M'_i$ separately, which takes a total of  $O(\frac{n}{\log n}
\frac{\log n}{\log \log n}\log(\frac{\log n}{\log\log n})) = O(n)$ additional space and $O(n \log \log n)$ time. 


We repeat the above preprocessing on the transpose of $M$.
Now consider a submatrix maximum query. If the range of rows starts inside some $M_{i}$ and 
ends inside another $M_{i'}$, the query can be decomposed into two queries fully within $M_{i}$ and
$M_{i'}$, respectively, and an infix corresponding to a range of rows of $M'$. The maximum
in the infix can be computed in $O(\log\log n)$ time using the structure constructed
for $M'$. Consequently, it is enough to show how to answer a query in $O(\log\log n)$
time when the range of rows is fully within a single $M_{i}$. In such case, if the range of rows
starts inside some $M_{i,j}$ and ends inside another $M_{i,j'}$, the query can be decomposed
into a prefix fully within $M_{i,j}$, an infix corresponding to a range of rows in $M'_{i}$
and a suffix fully within some $M_{i,j'}$. 
The query on the infix can be answered using the data structure for $M'_{i}$. 
Consequently, we reduced the  query  in $O(\log\log n)$ time to four
queries such that the range of rows in each query is fully
within a single $M_{i,j}$.
Since each $M_{i,j}$ consists of $O(\log\log n)$ rows of $M$,
by taking the union of the rows of $M$ corresponding to all
these $M_{i,j}$'s and also including the row containing the maximum in the infixes,
we have identified, in $O(\log\log n)$ time, a set of $O(\log\log n)$ rows of
$M$ that contain the desired submatrix maximum. 

Now we repeat the same procedure on the transpose of $M$ to identify a set of
$O(\log\log n)$ columns of $M$ that contain the desired submatrix
maximum.
Since a submatrix of a Monge matrix is also Monge, the submatrix of
$M$ corresponding to these sets of candidate rows and columns is an
$O(\log\log n) \times O(\log\log n)$ Monge matrix. 
By running the SMAWK algorithm~\cite{SMAWK} in $O(\log\log n)$ time on
this small
Monge matrix, we can finally determine the answer.
\qed \end{proof}

\section{Lower Bound}
\label{sec:lower bound}
A predecessor structure stores a set of $n$ integers $S\subseteq [0,U)$, so that given $x$ we can determine the largest $y\in S$
such that $y\leq x$. As shown by P{\v{a}}tra{\c{s}}cu and Thorup~\cite{PT2006}, for $U=n^{2}$
any predecessor structure consisting of $O(n \polylog(n))$ words needs $\Omega(\log\log n)$ time to answer queries,
assuming that the word size is $\Theta(\log n)$. We will use their result to prove that our structure is in fact optimal.

Given a set of $n$ integers $S\subseteq [0,n^{2})$ we want to construct an $n\times n$ Monge matrix $M$ such that the predecessor of
any $x$ in $S$ can be found using one submatrix maximum query on $M$ and $O(1)$ additional time (to decide which query to ask
and then return the final answer). Then, assuming that for any $n\times n$ Monge matrix there exists a data structure of size $O(n\polylog(n))$
answering submatrix maximum queries in $o(\log\log n)$ time, we can construct a predecessor structure
of size $O(n\polylog(n))$ answering queries in $o(\log\log n)$ time, which is not possible.
The technical difficulty here is twofold. First, $M$ should be Monge. Second, we are working in the indexing model, i.e., the
data structure for submatrix maximum queries should be able to access the matrix. Therefore, for the lower bound to carry over, $M$ should have the following
property: there is a data structure of size $O(n\polylog(n))$ which retrieves any $M[i,j]$ in $O(1)$ time. Guaranteeing
that both properties hold simultaneously is not trivial. 

Before we proceed, let us comment on the condition $S\subseteq [0,n^{2})$. While quadratic universe is enough to invoke
the $\Omega(\log\log n)$ lower bound for structures of size
$O(n\polylog(n))$, our reduction actually implies  that even for larger
polynomially bounded universes, i.e., $S\subseteq [0,n^{c})$, for any
fixed $c$, it is possible to construct an $n\times n$ Monge matrix $M$
such that the predecessor of $x$ in $S$ can be found with $O(1)$ submatrix maximum queries on $M$ and $O(1)$
additional time (and, as previously, any $M[i,j]$ can be retrieved in
$O(1)$ time with a structure of size $O(n)$). This is a consequence of the following lemma.

\begin{lemma}
\label{lem:universe reduction}
For any constant $c\geq 2$, predecessor queries on a set of $n$ integers $S\subseteq [0,n^{c})$ can be reduced in $O(1)$ time to $O(1)$
predecessor queries on a set of $n$ integers $S'\subseteq [0,n^{2})$ with a structure of size $O(n)$.
\end{lemma}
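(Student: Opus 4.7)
The plan is to recursively halve the exponent of the universe. Since $c$ is constant, after $O(\log c)=O(1)$ levels the universe shrinks from $n^c$ to $n^2$, and the total number of produced queries is $2^{O(\log c)}=O(1)$. For the base case $c\le 2$ the claim is trivial: take $S'=S$.

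For $c>2$, set $b=\lceil c/2\rceil$ and write each $s\in S$ uniquely as $s=h(s)\cdot n^b+\ell(s)$ with $h(s)\in[0,n^{c-b})$ and $\ell(s)\in[0,n^b)$. Precompute the set of distinct high parts $H=\{h(s):s\in S\}$, and for each $h\in H$ the low-part set $S_h=\{\ell(s):h(s)=h\}$ together with its maximum $M_h$. Each of these sets has at most $n$ integers and lives in a universe of size at most $n^{\lceil c/2\rceil}$. A predecessor query for $x=h_x\cdot n^b+\ell_x$ in $S$ is answered with $O(1)$ predecessor queries at this level: first query $H$ for the predecessor of $h_x$; if it equals $h_x$, query $S_{h_x}$ for the predecessor of $\ell_x$ and, when it exists, combine the two to form the answer. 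Otherwise (or if the $S_{h_x}$ query fails) query $H$ for the predecessor of $h_x-1$ to obtain some $h^-$ and return $h^-\cdot n^b+M_{h^-}$.

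Recursing on $H$ and on the relevant $S_h$ halves the exponent at each level, reaching universe $n^2$ after $O(\log c)$ levels and producing $2^{O(\log c)}=O(1)$ bottom-level queries, each on a set of at most $n$ integers in $[0,n^2)$. The total space is $O(n)$: at each level the sets $\{S_h\}_{h\in H}$ partition the relevant portion of $S$ and $|H|\le n$, giving $O(n)$ per level over $O(1)$ levels. To present the bottom-level queries as predecessor queries on a single set $S'$, note that only $O(1)$ bottom-level sets are touched along any query path; we tag each with a constant-size identifier and concatenate them into disjoint sub-ranges of a combined universe of size $O(n^2)$, which is still polynomial and inherits the $\Omega(\log\log n)$ lower bound of~\cite{PT2006}. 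The main obstacle is precisely this consolidation step: while the halving recursion and the $O(n)$-space accounting are routine, one must verify that only $O(1)$ bottom-level sets are simultaneously relevant along a query path so that the tag-and-concatenate trick keeps the universe polynomial in $n$ (and $|S'|$ linear in $n$).
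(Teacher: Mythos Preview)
Your recursion is sound up to (but not including) the consolidation step, and that step is where the argument breaks. After one halving level for $c=4$ you have the set $H$ together with the family $\{S_h\}_{h\in H}$, each contained in $[0,n^2)$ and with $O(n)$ elements in total; but the \emph{number} of sets is $|H|$, which can be $\Theta(n)$. Since $S'$ must be fixed at preprocessing time (different queries $x$ land in different $S_{h_x}$), your tag-and-concatenate must encode all of these sets, so the identifiers need $\Theta(\log n)$ bits and the concatenated universe becomes $\Theta(n)\cdot n^2=\Theta(n^3)$, not $O(n^2)$. Your own hedge (``which is still polynomial'') implicitly concedes this, but the lemma as stated requires $S'\subseteq[0,n^2)$; ``only $O(1)$ sets are touched per query'' is a statement about query cost, not about how many sets must be represented.

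The paper's proof closes exactly this gap with a rank-space encoding that your argument lacks. Writing $x_i=y_i\cdot n^2+z_i$, it puts all high and low parts into a single set $S'=\{y_i\}\cup\{z_i\}\subseteq[0,n^2)$; the crucial observation is that ranks in $S'$ lie in $[0,2n)$, so each element can be re-encoded as $\mathrm{rank}(y_i)\cdot n+\mathrm{rank}(z_i)\in[0,O(n^2))$, forming a second set $S''$. A predecessor query on $S$ then becomes two predecessor (rank) queries on $S'$ to compute $\mathrm{rank}(y),\mathrm{rank}(z)$, followed by one predecessor query on $S''$; finally $S'$ and $S''$ are merged into one set of size $n$ by shifting and thinning. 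This rank trick is precisely what collapses your $\Theta(n)$ low-part buckets into a single $[0,n^2)$ universe.
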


\begin{proof}
First we describe a weaker version of the reduction for $c=4$, where the resulting set of integers
is $S'\subseteq [0,3n^2)$.

Let $S=\{x_{1},x_{2},\ldots,x_{n}\}$. We represent every $x_{i}$ in base $n^{2}$ as $x_{i}=y_{i}\cdot n^{2}+z_{i}$, where
$y_{i},z_{i}\in [0,n^{2})$. We create a new set $Y \subseteq [0,n^{2})$ storing all $y_{i}$s and
a new set $Z\subseteq [0,n^2)$ storing all $z_{i}$s. For any $t\in [0,n^{2})$, let
$\text{rank}_{Y}(t)$ and $\text{rank}_{Z}(t)$ denote the rank of $t$ in $Y$ and $Z$, respectively,
where rank is the number of smaller elements in the set.
We create another set $R\subseteq [0,n^{2})$ storing 
elements of the form $\text{rank}_Y(y_{i})\cdot n+\text{rank}_Z(z_{i})$.
We also create a perfect hash table of size $O(n)$ mapping $y_{i}$ to $\text{rank}_{Y}(y_{i})$
and $z_{i}$ to $\text{rank}_{Z}(z_{i})$.
To find the predecessor of $x$ in $S$, we first represent it as $x=y\cdot n^2+z$.
We claim that it is always possible to reduce locating the predecessor
of $x$ in $S$ to the case where $y\in Y$ and $z\in Z$ in two steps.
Let $y'$ denote the predecessor of $y$ in $Y$ and $z'$ denote the predecessor of $z$ in $Z$.

\begin{enumerate}
\item If $z'$ is not defined, we decrease $y$ by one (adjusting $y'$ if necessary)
and replace $z$ by the largest element of $Z$. Otherwise, we replace $z$ by $z'$.
\item If $y'$ is not defined, $x$ has no predecessor in $S$. Otherwise, if $y' \neq y$ we replace
$y$ by $y'$ and $z$ by the largest element of $Z$.
\end{enumerate}

Both steps maintain the predecessor of $x$ in $S$ and take $O(1)$ time.
Finally, having reduced the general case so that $y\in Y$ and $z\in Z$, we
locate the predecessor of $x'=\text{rank}_{Y}(y)\cdot n+\text{rank}_{Z}(z)$ in $R$.
Because $y\in Y$ and $z\in Z$, both $\text{rank}_{Y}(y)$ and $\text{rank}_{Z}(z)$ can be
retrieved in $O(1)$ time from the perfect hash tables.
The predecessor of $x'$ in $R$ corresponds to the predecessor of $x$ in $S$,
because comparing two elements of the same set is equivalent to comparing their ranks there.
Formally, $x_i \leq x$ iff $y_i < y$ or $y_i=y$ and $z_i \leq z$, which is equivalent
to $\text{rank}_{Y}(y_i) < \text{rank}_{Y}(y)$ or $\text{rank}_{Y}(y_i)=\text{rank}_{Y}(y)$
and $\text{rank}_{Z}(z_i) \leq \text{rank}_{Z}(z)$, which because the ranks are all from $[0,n)$
can be stated as $\text{rank}_{Y}(y_i)\cdot n+\text{rank}_{Z}(z_i) \leq \text{rank}_{Y}(y)\cdot n
+\text{rank}_{Z}(z)$.
Consequently, a predecessor query on $S$ can be reduced into one predecessor query
into each of $Y,Z,R$. These three sets can be combined into a single set $S' \subseteq [0,3n^2)$,
such that predecessor queries in either of them
can be answered with predecessor queries on $S'$, by simply shifting every element of $Z$ by $n^{2}$
and every element of $R$ by $2n^{2}$.
Finally, the size of $S'$, which is up to $3n$ right now, can be reduced to $n$ as follows.
Let the elements of $S'$ be $x_{1}<x_{2}< \ldots < x_{3n}$. We store every $x_{3i}$,
for $i=1,2,\ldots,n$ in the predecessor structure. Additionally, for every $i$ we
explicitly store $x_{3i+1}$ and $x_{3i+2}$. Knowing the predecessor $x_{i}$ of $x$ among the
chosen elements allows us to find its predecessor among all elements in $O(1)$ time by additionally
inspecting $x_{3i+1}$ and $x_{3i+2}$.

Now we explain how to extend the above reduction for any constant $c\geq 2$, while also
ensuring that the resulting set of integers is $S'\subseteq [0,n^2)$.
If $n < 5$, we answer predecessor queries naively in $O(1)$ time.
If $n\geq 5$, by modifying the above reduction so that every $x_i$ is represented as $y_i \cdot n^2+z_i$,
where $y_i\in[0,n^{c-2})$ and $z_i\in [0,n^2)$, we obtain a set of $n$ integers from $[0,n^{c-1})$.
Hence, by iterating $c-3$ times we finally obtain a set of $n$ integers from $[0,n^3)$.
Then one final iteration, where we represent every $x_i$ as $y_i\cdot B + z_i$,
with $y_i,z_i\in[0,B)$ with $B=\lceil n^{1.5} \rceil$,
allows us to reduce the size of the universe to $2B+n^{2}$, which is at most $2n^{2}$
for $n\geq 5$.
To reduce the size of the universe to $n^{2}$, we divide every $x_{i}$ by 2.
Let the resulting set be $x'_{1}<x'_{2}<\ldots x'_{n'}$. 
We store a perfect hash table mapping $x'_{i}$ to $x'_{i-1}$ (if it exists) and a list of elements $x_{j}$ such that
$\lfloor x_{j}/2\rfloor = x'_{i}$ (note that there are at most two such $j$s).
To find the predecessor of $x$, we then find the predecessor $x'_{i}$ of $\lfloor x/2\rfloor$ in the obtained set.
Then, we inspect all the elements stored on the lists of $x'_{i}$ and $x'_{i-1}$ (accessed from the entry of $x'_{i}$ in the
perfect hash table) and return the largest not exceeding $x$ in $O(1)$ time.
\qed \end{proof}

\noindent The following propositions are easy to verify:

\begin{proposition}
\label{prop:adjacent condition}
An $m\times n$ matrix $M$ is Monge iff $M[i,j]+M[i+1,j+1] \ge M[i+1,j]+M[i,j+1]$ for all $i=1,2,\ldots,m-1$ and $j=1,2,\ldots,n-1$.
\end{proposition}

\begin{proposition}
\label{prop:adjustment}
If a matrix $M$ is Monge, then for any vector $H$ the matrix $M'$, where $M'[i,j]=M[i,j]+H[j]$ for all $i,j$, is also Monge.
\end{proposition}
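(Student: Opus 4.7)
The plan is to invoke Proposition~\ref{prop:adjacent condition}, which reduces checking the Monge property of $M'$ to verifying the local $2\times 2$ inequality $M'[i,j]+M'[i+1,j+1]\le M'[i+1,j]+M'[i,j+1]$ at every pair of adjacent rows and adjacent columns with all four entries defined. The local nature of this criterion is what makes the claim nearly immediate: since $H$ depends only on the column index, the adjustment contributes symmetrically on both sides of any such $2\times 2$ inequality.

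Concretely, I would substitute $M'[i,j]=M[i,j]+H[j]$ into both sides of the adjacent-cell inequality. The left-hand side picks up $H[j]+H[j+1]$, and so does the right-hand side, because the multiset of column indices of the two entries on the left ($j$ from $M'[i,j]$ and $j+1$ from $M'[i+1,j+1]$) equals the multiset of column indices on the right ($j$ from $M'[i+1,j]$ and $j+1$ from $M'[i,j+1]$). These $H$-contributions cancel, and the inequality for $M'$ reduces to the corresponding $2\times 2$ Monge inequality for $M$, which holds by the hypothesis that $M$ is Monge together with Proposition~\ref{prop:adjacent condition} applied in the forward direction to $M$. Another application of Proposition~\ref{prop:adjacent condition}, now in the reverse direction, lifts the local inequalities for $M'$ to the full Monge property.

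There is no meaningful obstacle here; the proof is a one-line calculation. The only piece of bookkeeping is to note that $M$ and $M'$ share the same support of defined entries, since $H[j]$ is added precisely where $M[i,j]$ is defined, so the set of $2\times 2$ blocks to which the adjacent-cell criterion applies is identical for $M$ and $M'$, and the equivalence transfers without modification.
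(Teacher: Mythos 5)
Your proof is correct; the paper states this proposition without proof (``easy to verify''), and your cancellation of the $H$-terms is exactly the intended verification. The detour through Proposition~\ref{prop:adjacent condition} is harmless but unnecessary: for arbitrary rows $i<j$ and columns $k<\ell$, both sides of the Monge inequality for $M'$ acquire the same additive term $H[k]+H[\ell]$, so the cancellation works directly on the full definition without first reducing to adjacent rows and columns.
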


\begin{proposition}
\label{prop:undef}
If a matrix $M$ is partial Monge, then it remains partial Monge after replacing any element of $M$ by a blank, so long as the defined entries in each row and in each column remain contiguous.
\end{proposition}

\begin{proposition}
\label{prop:duplicate}
If a $m$-by-$n$ matrix $M$ is (partial) Monge, then the $(m+1)$-by-$n$ matrix resulting by replacing any row of $M$ by two identical copies of that row is also (partial) Monge. An analogous statement holds for duplicating any column of $M$. 
\end{proposition}

\begin{theorem}
\label{thm:reduction}
For any set of $n$ integers $S\subseteq [0,n^{2})$, there exists a data structure of size $O(n)$ returning any $M[i,j]$
in $O(1)$ time, where $M$ is a Monge matrix such that the predecessor of $x$ can be found using $O(1)$ time and one
submatrix maximum query on $M$.

\end{theorem}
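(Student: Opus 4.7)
The plan is to construct the $n \times n$ Monge matrix $M$ as the sum of a base matrix with strictly positive Monge slack, row/column potentials, and a small perturbation encoding $S$. I would write each universe element $x \in [0, n^2)$ as $x = a n + b$ with $a, b \in \{0, \ldots, n-1\}$, and each $s_k \in S$ as $s_k = a_k n + b_k$, embedding the universe and $S$ in the $n \times n$ grid indexing $M$. The predecessor query then becomes a lex-order dominance search on the $n$ grid points $(a_k, b_k)$.

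The base matrix $B[i, j] = -c \cdot i j$ has a constant adjacent Monge expression equal to $-c$ by Proposition~\ref{prop:adjacent condition}, so taking $c = \Theta(n^2)$ leaves enough slack to absorb any perturbation whose entries are bounded by the universe. By Proposition~\ref{prop:adjustment}, I can freely add row and column potentials $R[i], C[j]$. The perturbation $\delta[i, j]$ is built as a sum of $n$ min-Monge ``corner'' primitives of the form $-w_k \cdot [i \geq a_k]\cdot[j \geq b_k]$, each of which is min-Monge (a direct check shows the adjacency expression equals $-w_k$ at $(i, j) = (a_k - 1, b_k - 1)$ and $0$ elsewhere); summing $n$ of them yields a min-Monge $\delta$, and $M = B + R + C + \delta$ is min-Monge provided $c$ is sufficiently large. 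Each entry $M[i, j]$ is the base value plus a 2D dominance weight-sum over the $n$ points $(a_k, b_k)$, which can be evaluated in $O(1)$ time with an $O(n)$-sized static precomputation of per-row prefix weight sums over the $n$ points (sorted in lex order).

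For a query $x = (a, b)$, I would pose a single submatrix minimum query on a rectangle whose opposite corner is determined in $O(1)$ from $(a, b)$ so that the domination of $B$ pins the minimum to that corner; by the choice of the weights $w_k$ this corner's $M$-value decodes, after subtracting the $O(1)$-computable $B + R + C$ contribution, into the value of the lex predecessor of $(a, b)$ in $S$. Lemma~\ref{lem:universe reduction} then extends the reduction to arbitrary polynomial universes $[0, n^c)$.

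The main obstacle is that the most natural encoding, $\delta[i, j] = \pred{in+j}$ read as an $n \times n$ reshape of the predecessor function, is not Monge: the adjacency check of Proposition~\ref{prop:adjacent condition} fails at every jump of the predecessor, with an adjacency violation of size up to the universe bound. The fix is to assemble $\delta$ from min-Monge corner primitives whose superposition is Monge by design; the delicate engineering is to align the lex order on the universe with the dominance order on the grid (handled through the pairing of the query rectangle corner with $(a, b)$), and to calibrate the weights $w_k$ so that the dominance weight-sum carried at the selected corner uniquely identifies $s_{k^\ast}$ for the predecessor index $k^\ast$, rather than an indistinguishable aggregate of several $s_k$ values.
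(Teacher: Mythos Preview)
Your proposal has a genuine gap in the entry-access step. You set $\delta[i,j]=-\sum_{k:\,a_k\le i,\,b_k\le j} w_k$, so retrieving $M[i,j]$ in $O(1)$ time requires answering a two-dimensional dominance weight-sum over the $n$ points $(a_k,b_k)$ on an $n\times n$ grid. This is \emph{not} a one-dimensional prefix sum in lex order: the set $\{k:a_k\le i,\ b_k\le j\}$ is an orthogonal range, not a lex prefix, so ``per-row prefix weight sums over the $n$ points'' does not give $O(1)$ access in $O(n)$ space. In fact, constant-time $O(n)$-space two-dimensional dominance counting would, by a standard reduction, solve predecessor in $[0,n^2)$ in $O(1)$ time with $O(n)$ space, contradicting the very lower bound this theorem is meant to invoke. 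So the construction is circular: accessing an entry of your $M$ is at least as hard as the predecessor problem you are reducing from.

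There is a second, related gap you flag yourself but do not close: the lex predecessor of $(a,b)$ need not lie in the dominance quadrant $\{(a',b'):a'\le a,\ b'\le b\}$ (e.g., the predecessor could be $(a-1,n-1)$ with $n-1>b$), so no weight calibration on the dominance sum at a single predetermined corner can recover it; and if the corner is predetermined, the submatrix query is doing no work. The paper sidesteps both issues by a completely different design: rows of $M$ are indexed by the elements of $S$ (plus two artificial rows per block), each row is given by an explicit three-piece formula depending on a \emph{single} stored value $a'_j$, and the answer is read from the \emph{row index} of a subcolumn minimum rather than from a decoded value. This makes $O(1)$ entry access immediate and lets the minimum's location (not its value) carry the information.
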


\begin{proof}

We partition the universe $[0,n^{2})$ into $n$ parts $[0,n),[n,2n),\ldots$. The $i$-th part $[i\cdot n,(i+1)\cdot n)$
defines a Monge matrix $M_{i}$ consisting of $2+|S\cap [i\cdot
n,(i+1)\cdot n)|$ rows and $n$ columns. The first and the last row are artificial, and others encode the
elements of $S\cap [i\cdot n,(i+1)\cdot n)$.
The idea is to encode the predecessor of $x\in [0,n^{2})$ by the
maximum element in the $(x\bmod n+1)$-th column
of $M_{\lfloor x/n\rfloor }$. 
 We first describe
how these matrices are defined, and then show how to stack them together.

Consider any $0 \leq i < n$. Every element in $S\cap
[i\cdot n,(i+1)\cdot n)=\{a_{1},a_{2},\ldots,a_{k}\}$ has a unique
corresponding row in $M_i$.
Let $a_{j}=i\cdot n + a'_{j}$, so that $a'_{1}<a'_{2}<\ldots < a'_{k}$
and $a'_{j}\in [0,n)$ for all $j$, and also define $a'_{k+1}=n$. We
describe an incremental construction of $M_i$.
For technical reasons, we start with an artificial top row containing $n-1,n-2,\ldots,1$. Then we add the rows corresponding
to $a'_{1},a'_{2},\ldots,a'_{k}$. The row corresponding to $a'_{j}$ consists of three parts. The middle part starts at the
$(a'_{j}+1)$-th column, ends at the $a'_{j+1}$-th column, and contains only $n$'s. The elements in the left part 
increase by $1$ and end with $n-1$ at the $a'_j$-th column, similarly
the elements in the right part (if any) start with $n-1$ at the
$(a'_{j+1} + 1)$-th column and decrease by $1$.
Formally, the $k$-th element of the $(j+1)$-th row, denoted $M_{i}[j+1,k]$, is defined as follows.
\begin{eqnarray}
\label{eqn:M_i}
M_{i}[j+1,k]=\begin{cases}
n-1-a'_{j}+k &\mbox{ if } k\in [1,a'_{j}] \\
n &\mbox{ if } k\in [a'_{j}+1,a'_{j+1}] \\
n-k+a'_{j+1} &\mbox{ if } k\in [a'_{j+1}+1,n]
\end{cases}
\end{eqnarray}

Finally, we end with an artificial bottom row containing $1,2,\ldots,n$.
See Figure~\ref{fig:reduction} for an example. We need to argue that every $M_{i}$ is Monge. By
Proposition~\ref{prop:adjacent condition}, it is enough to consider every pair of adjacent rows $r_{1},r_{2}$ there.
Define $r'_{1}[j]=r_{1}[j]-r_{1}[j-1]$ and similarly $r'_{2}[j]=r_{2}[j]-r_{2}[j-1]$. To prove that $M_{i}$ is Monge, it is
enough to argue that $r'_{2}[j]\geq r'_{1}[j]$ for all $j\geq 2$. By construction, both $r'_{1}$ and $r'_{2}$ are of the form
$1,1,\ldots,1,0,0,\ldots,0,-1,-1,\ldots,-1$, and all $0$'s in $r'_{2}$ are on the right of all $0$'s in $r'_{1}$.
Therefore, $M_{i}$ is Monge.

Now one can observe that the predecessor of $x\in [0,n^{2})$ can be found by looking at the $(x\bmod n+1)$-th column
of $M_{\lfloor x/n\rfloor }$. We check if $x < a_{1}$, and if so return the predecessor of $a_{1}$ in the whole $S$.
This can be done in $O(1)$ time and $O(n)$ additional space by explicitly storing $a_{1}$ and its predecessor for every $i$.
Otherwise we know that the predecessor of $x$ is $a_{j}$ such that $x\bmod n \in [a'_{j},a'_{j+1})$, and, 
by construction, we only need to find $j\in [1,k]$ such that the $(x\bmod n+1)$-th element of row $j+1$ in $M_{i}$ is $n$.
This is exactly a subcolumn maximum query.

We cannot simply concatenate all $M_{i}$'s to form a larger Monge matrix. We use Proposition~\ref{prop:adjustment}
instead. Initially, we set $M=M_{0}$. Then we consider every other $M_{i}$ one-by-one maintaining invariant
that the current $M$ is Monge and its last row is $1,2,\ldots,n$. In every step we add the vector 
$H=[n-1,n-3,\ldots,-n+1]$ to the current matrix $M$, obtaining a matrix $M'$ whose last row is $n,n-1,\ldots,1$. By Proposition~\ref{prop:adjustment}, $M'$ is Monge. 
Then we can construct the new $M$ by appending $M_{i}$ without its first row to $M'$.
Because the first row of $M_{i}$ is also $n-1,n-2,\ldots,1$, the new $M$ is also Monge. Furthermore, because we add the
same value to all elements in the same column of $M_{i}$, answering subcolumn maximum queries on $M_{i}$ can
be done with subcolumn maximum queries on the final $M$.  The right side of Figure~\ref{fig:reduction} depicts
the final Monge matrix $M$.

We need to argue that elements of $M$ can be accessed in $O(1)$ time using a data structure of size $O(n)$. To
retrieve $M[j,k]$, first we lookup in $O(1)$ time the appropriate $M_{i}$ from which it originates. This
can be preprocessed and stored for every $j$ in $O(n)$ total space and allows us to reduce the question to retrieving
$M_{i}[j',k]$. Because Proposition~\ref{prop:adjustment} is applied exactly $n-1-i$ times after appending $M_{i}$
to the current $M$, then we can return $M_{i}[j',k]+(n-1-i)\cdot H[k]$. To find $M_{i}[j',k]$, we just directly use Equation~\ref{eqn:M_i},
which requires only storing $a'_{1},a'_{2},\ldots,a'_{n}$ in $O(n)$ total space.
\qed \end{proof}

\begin{figure}[htb]
\centering
{\small
\begin{minipage}[]{0.2\linewidth}
$$ M_{0} = \left[ \begin{array}{cccccccc}
8 & 7 & 6 & 5 & 4 & 3 & 2 & 1\\
5 & 6 & 7 & 8 & 8 & 8 & 8 & 8\\
1 & 2 & 3 & 4 & 5 & 6 & 7 & 8\\
\end{array} \right] $$
$$ M_{1} = \left[ \begin{array}{cccccccc}
8 & 7 & 6 & 5 & 4 & 3 & 2 & 1\\
1 & 2 & 3 & 4 & 5 & 6 & 7 & 8\\
\end{array} \right] $$
$$ M_{2} = \left[ \begin{array}{cccccccc}
8 & 7 & 6 & 5 & 4 & 3 & 2 & 1\\
6 & 7 & 8 & 8 & 8 & 7 & 6 & 5\\
3 & 4 & 5 & 6 & 7 & 8 & 7 & 6\\
2 & 3 & 4 & 5 & 6 & 7 & 8 & 8\\
1 & 2 & 3 & 4 & 5 & 6 & 7 & 8\\
\end{array} \right] $$
$$ M_{3} = \left[ \begin{array}{cccccccc}
8 & 7 & 6 & 5 & 4 & 3 & 2 & 1\\
1 & 2 & 3 & 4 & 5 & 6 & 7 & 8\\
\end{array} \right] $$
\end{minipage}
\hspace{0.5cm}
\begin{minipage}[]{0.2\linewidth}
$$ M_{4} = \left[ \begin{array}{cccccccc}
8 & 7 & 6 & 5 & 4 & 3 & 2 & 1\\
1 & 2 & 3 & 4 & 5 & 6 & 7 & 8\\
\end{array} \right] $$
$$ M_{5} = \left[ \begin{array}{cccccccc}
8 & 7 & 6 & 5 & 4 & 3 & 2 & 1\\
6 & 7 & 8 & 8 & 8 & 8 & 7 & 6\\
2 & 3 & 4 & 5 & 6 & 7 & 8 & 8\\
1 & 2 & 3 & 4 & 5 & 6 & 7 & 8\\
\end{array} \right] $$
$$ M_{6} = \left[ \begin{array}{cccccccc}
8 & 7 & 6 & 5 & 4 & 3 & 2 & 1\\
1 & 2 & 3 & 4 & 5 & 6 & 7 & 8\\
\end{array} \right] $$
$$ M_{7} = \left[ \begin{array}{cccccccc}
8 & 7 & 6 & 5 & 4 & 3 & 2 & 1\\
7 & 8 & 8 & 8 & 7 & 6 & 5 & 4\\
4 & 5 & 6 & 7 & 8 & 8 & 8 & 8\\
1 & 2 & 3 & 4 & 5 & 6 & 7 & 8\\
\end{array} \right] $$
\end{minipage}
\hspace{2cm}
\begin{minipage}[]{0.3\linewidth}
$$ M = \left[ \begin{array}{
cccccccc}
57 & 42 & 27 & 12 & -3 & -18 & -33 & -48\\
54 & 41 & 28 & 15 & 1 & -13 & -27 & -41\\
50 & 37 & 24 & 11 & -2 & -15 & -28 & -41\\
43 & 32 & 21 & 10 & -1 & -12 & -23 & -34\\
41 & 32 & 23 & 13 & 3 & -8 & -19 & -30\\
38 & 29 & 20 & 11 & 2 & -7 & -18 & -29\\
37 & 28 & 19 & 10 & 1 & -8 & -17 & -27\\
36 & 27 & 18 & 9 & 0 & -9 & -18 & -27\\
29 & 22 & 15 & 8 & 1 & -6 & -13 & -20\\
22 & 17 & 12 & 7 & 2 & -3 & -8 & -13\\
20 & 17 & 14 & 10 & 6 & 2 & -3 & -8\\
16 & 13 & 10 & 7 & 4 & 1 & -2 & -6\\
15 & 12 & 9 & 6 & 3 & 0 & -3 & -6\\
8 & 7 & 6 & 5 & 4 & 3 & 2 & 1\\
7 & 8 & 8 & 8 & 7 & 6 & 5 & 4\\
4 & 5 & 6 & 7 & 8 & 8 & 8 & 8\\
1 & 2 & 3 & 4 & 5 & 6 & 7 & 8\\
\end{array} \right] $$
\end{minipage}
\caption{Reduction for $n=8$ and $S=\{8\cdot 0+3,8\cdot 2+2,8\cdot 2+5,8\cdot 2+6,8\cdot 5+2,8\cdot 5+6,8\cdot 7+1,8\cdot 7+4\}$.}
\label{fig:reduction}
}
\end{figure}

\section{Data structure for partial Monge matrices}
\label{sec:partial}
Our goal in this section is to extend the solution described in Section~\ref{sec:linear} to \emph{partial}
Monge matrices. Recall that in a partial Monge matrix $M$, for any
$i<j$ and $k < \ell$, the condition $M[i,k]+M[j,\ell] \ge
M[i,\ell]+M[j,k]$ holds only if all of $M[i,k],M[j,\ell],M[i,\ell],M[j,k]$
are defined. Not all entries in $M$ are defined, but the defined
entries in every row and every column are contiguous. Let $s_{i}$ and $t_{i}$ denote the first and last columns containing defined entries in the $i$'th row respectively. 
We assume that we know the coordinates
of at least one of the defined entries. This allows us to find all $s_{i}$'s and $t_{i}$'s in $O(n\log n)$ time.

The following Lemma states that we can implicitly fill appropriate constants instead of the undefined (blank) entries to turn a partial Monge matrix into a full Monge matrix:

\begin{lemma}\label{lemma:filltheblanks}
The blank entries in an $m\times n$ partial Monge matrix $M$ can be implicitly replaced in $O(m+n)$ time so that $M$ becomes Monge and each $M_{ij}$ can be returned in $O(1)$ time.
\end{lemma}
\begin{proof}
Let $s_i$ (resp. $t_i$) denote the index of the leftmost (resp. rightmost) column that is defined in row $i$. 
Since the defined (non-blank) entries of each row and column are continuous we have that the sequence $ s_1,s_2,\ldots,s_m$ starts with a non-increasing prefix $s_1\ge s_2\ge \ldots \ge s_a$ and ends with a non-decreasing suffix $s_a\le s_{a+1}\le \ldots \le s_m$. 
Similarly, the sequence $t_1,t_2,\ldots,t_n$ starts with a non-decreasing prefix $ t_1\le t_2\le \ldots \le t_b$ and ends with a non-increasing suffix $t_b\ge t_{b+1}\ge \ldots \ge t_m$. 

We partition the blank region of $M$ into four regions: (I) entries that are above and to the left of $M[i, s_i]$ for $i=1,\ldots,a$, 
(II) entries that are below and to the left of $M[i, s_i]$ for $i=a+1,\ldots,m$, 
(III) entries that are above and to the right of $M[i ,t_i]$ for $i=1,\ldots,b$, 
(IV) entries that are below and to the right of $M[i ,t_i]$ for $i=b+1,\ldots,n$.
We first describe how to replace all entries in region I to make them non-blank and
obtain a valid partial Monge matrix (whose blank entries are only in regions II, III, and IV). The remaining regions are handled in a similar manner, one after the other.

We describe our method for filling in the blank entries in region I in two steps. In the first step we show how to implicitly fill in the blanks in a lower right triangular Monge matrix so that each filled blank entry can be computed in $O(1)$ time. By a lower right triangular Monge matrix we mean a partial Monge square matrix with $n$ rows and columns, such that, for all $1\leq i\leq n$, $s_i = n-i+1$. In the second step we explain that any $m$-by-$n$ partial Monge matrix whose blank entries are in region I can be turned into a lower right triangular Monge matrix with at most $m+n$ rows and columns. The only operations used in the transformation are duplicating rows, duplicating columns, and turning elements into blanks. We will show an $O(m+n)$ procedure for computing two tables. One specifying, for each row index $1\leq i\leq m$, the corresponding row index in the larger $O(m+n)$ triangular matrix. The second is an analogous table for the columns indices. The lemma then follows for the blank entries in region I. The other regions are treated by reducing to the region I case, one after the other.

We now describe how to fill in the blank regions in a lower right triangular Monge matrix. Let $W$ denote the largest absolute value of any non-blank entry in $M$ (We can find $W$ by applying the
algorithm of Klawe and Kleitman~\cite{KK89}). 
Intuitively, we would like to make every $M[i,j]$ in the upper left triangle very large. However,
we cannot simply assign the same large value to all of them, because then the Monge
inequality would not be guaranteed to hold if more than one of the four considered elements
belongs to the replaced part of the matrix. 
A closer look at all possible cases shows that setting all the entries of each diagonal to the same value does work. 
More precisely, we replace the blank element $M[i,j]$ with $3W[2(n-i-j)+1]$. Thus, each element in the first diagonal off the main diagonal ($i+j=n$) is set to $3W$, the elements of the second diagonal off the main diagonal are set to $9W$, etc. Note that the maximum element in the resulting matrix is $O(nW)$. 
To prove that the resulting new matrix $M'$ is Monge, it suffices, by Proposition~\ref{prop:adjacent condition}, to show that, for all $1\leq i,k < n$,  $M'[i,k]+M'[i+1,k+1]-M'[i,k+1]-M'[i+1,k] \geq 0$.
To this end we consider the following cases:
\begin{enumerate}
\item $i+k>n$, so all $M[i,k],M[i+1,k+1],M[i,k+1],M[i+1,k]$ are non-blank, and the inequality holds because $M$
is partial Monge.
\item $i+k=n$, so $M[i,k]$ is blank and $M[i+1,k+1],M[i,k+1],M[i+1,k]$ are non-blank. Then\\
$M'[i,k]+M'[i+1,k+1]-M'[i,k+1]-M'[i+1,k] = 3W + M'[i+1,k+1]-M'[i,k+1]-M'[i+1,k] \geq 3W-3W = 0$.
\item $i+k=n-1$, so $M[i,k],M[i,k+1],M[i+1,k]$ are blank, and $M[i+1,k+1]$ is non-blank. Then\\
$M'[i,k]+M'[i+1,k+1]-M'[i,k+1]-M'[i+1,k] = 9W + M[i+1,k+1] - 3W-3W \geq 3W - W \geq 0$.
\item $i+k<n-1$, so all $M[i,k],M[i+1,k+1],M[i,k+1],M[i+1,k]$ are blank. Then,\\ 
$M'[i,k]+M'[i+1,k+1]-M'[i,k+1]-M'[i+1,k] = 0$.
\end{enumerate}
Hence the new matrix $M'$ is indeed Monge.

Next, we describe how to turn any $m$-by-$n$ partial Monge matrix $M$ whose blank entries are in region I into a slightly larger lower right triangular matrix $M'$. This is done by duplicating some rows or columns of $M$ and replacing by blanks a nonempty prefix in all but a single copy. Thus, each row $r$ (column $c$) of $M$ has at least one appearance in $M'$ in which no elements are replaced by blanks. We say that $r$ ($c$) is mapped to such an appearance in $M'$.   Propositions~\ref{prop:undef} and~\ref{prop:duplicate} guarantee that $M'$ is partial Monge. 
For ease of presentation we describe the process as if we actually transform the $M$ into $M'$, although in reality we only need to compute the mappings of rows and columns. 

The assumption that the blank entries are in region I implies that $s_1 \geq s_2 \geq \cdots \geq s_m=1$ and that $t_1=t_2=\cdots = t_m = n$. 
We first guarantee that the $s_i$'s are strictly decreasing. We do this by iterating through the $s_i$'s. If $s_i = s_{i-1}$, we duplicate the column $s_i$ of $M$ (i.e., shift all columns at indices greater than $i$ by one position, and insert a copy of column $s_i$ at the vacant index $s_i+1$), make $M[i-1,s_i]$ blank, and mark the column currently at index $s_i$ as a duplicate (the index of this column might change later if columns with smaller indices will be duplicated). This column duplication has the effect of increasing by 1 all $s_j$'s for $j<i$. Let $M'$ denote the matrix obtained from $M$ at the end of this process, and let $s'_i$ denote the index of the first defined entry in row $i$ of $M'$. If $M'$ does not have $m+n$ columns, we insert a sufficient number of copies of first column of $M'$ to make it so. 
We construct a table $c[\cdot]$ which keeps track of the mapping of columns of $M$ to $M'$ by recording for each non-duplicate row its original index in $M$ and its index in $M'$. Clearly, computing $c[\cdot]$ and the indices $s'_i$'s can be done in $O(m)$ time without actually duplicating the columns. See Figure~\ref{fig:stretch} (middle) for an illustration.

We will further modify $M'$ and 
 use a table $r[\cdot]$ to keep track of the mapping from rows of $M$ to rows of $M'$. 
For convenience, we define $s'_0 = m+n+1$, and $r[0] =0$. We iterate through the sequence $s'_1,s'_2, \dots, s'_m$. 
We add to $M'$ $s'_{i-1}-s'_i$ copies of row $i$ of $M'$, and, for $j=1,2, \dots, s'_{i-1}-s'_i-1$, replace the prefix of length $j$ from the $j$'th copy by blanks, so only the last copy remains unchanged. We therefore set $r[i]$ to $r[i-1]+s'_{i-1}-s'_i$. Clearly, we can compute the table $r[\cdot]$ in $O(m+n)$ time without actually modifying $M'$.
See Figure~\ref{fig:stretch} (right) for an illustration.

\begin{figure}[h!]
\centering
   \includegraphics[width=0.9\textwidth]{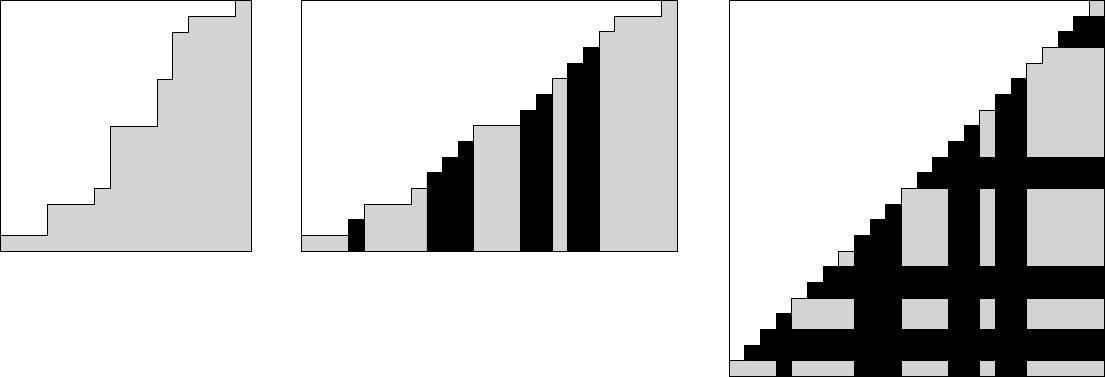}
   \caption{
   A staircase matrix $M$ is transformed into a triangular matrix $M'$ in two steps by duplicating columns and rows. Defined entries
   are gray, undefined white, and duplicated columns/rows black.\label{fig:stretch}}
 \end{figure}

Finally, to obtain the value with which the blank entry at $M[i,j]$ should be replaced when converting $M$ into a full Monge matrix, we return $3W[2(m+n-r[i]-c[j])+1]$.

Regions II, III, and IV can be handled symmetrically to region I. To handle undefined entries in region II, we 
implicitly reverse the order of the rows and negate all the elements of the matrix. It is easy to verify that the resulting matrix is Monge with undefined entries in region I. We then implicitly fill in the undefined values using the method described above, negate all the elements and revert the order of rows to its original order. The transformation for region III is reversing the order of columns and negating all elements, and the transformation for region IV is reversing the order of both rows and columns.
Note that to make $M$
full Monge we first need to fill the blanks in region I, then calculate the new value of $W$
and fill the blanks in region II accordingly, and so on.
\qed \end{proof}

For subcolumn (or subrow) maximum queries, the above lemma implies that we can handle partial Monge matrices in the same bounds as full Monge matrices (i.e., the bounds of Theorem~\ref{thm:subcolumn2} and Corollary~\ref{lem:micro} also apply to partial Monge matrices). 
Upon subcolumn query (a column $c$ and a range of rows $R$) we first restrict $R$ to the defined entries in the column  $c$ and only then query the data structure. 

For submatrix queries however, this trick only works if the query range is entirely defined. In general, it does not work because the defined entries in the query range do not necessarily form a submatrix. Handling submatrix queries is therefore more complicated. Our solution is based on the following decomposition. 

\subsection{Decomposing a partial Monge matrix into staircase matrices}  

Our data structure relies on a decomposition of $M$ into {\em staircase}
matrices. A partial matrix is staircase if the defined entries in its rows either all  begin in the first column and the $t_i$s are monotone, or all end in the last column and the $s_i$s are monotone. 
It is well known (cf.~\cite{AggarwalK90}) that by cutting
$M$ along columns and rows, it can be decomposed into staircase
matrices $\{M_i\}$ such that each row is covered by at most two matrices,
and each column is covered by at most three  matrices. 
For completeness, we  describe such a decomposition below.

\begin{lemma}\label{lemma:decomposition}
A partial matrix $M$ can be decomposed into staircase
matrices $\{M_i\}$ such that each row is covered by at most two matrices,
and each column is covered by at most three  matrices.
\end{lemma}
\begin{proof}

Let $s_i$ and $t_i$ denote the smallest and largest column index
in which an element in row $i$ is defined, respectively. 
The fact that the defined entries of $M$ are contiguous in both rows
and columns implies that the sequence $s_1, s_2, \dots, s_m$ consists of a
non-increasing prefix and a non-decreasing suffix. Similarly, the 
sequence $t_1, t_2, \dots, t_m$ consists of a
non-decreasing prefix and a non-increasing suffix. 
It follows that the rows of $M$ can be divided into three ranges - 
a prefix where $s$ is non-increasing and $t$ is non-decreasing, an infix where
both $s$ and $t$ have the same monotonicity property, and a suffix
where $s$ is non-decreasing and $t$ is non-increasing.
The defined entries in the prefix of the rows can be divided into two
staircase matrices by splitting $M$ at $t_1$, the largest column where the
first row has a defined entry. 
Similarly, the defined entries in the suffix of the rows can be divided into two
staircase matrices by splitting it at $t_m$, the largest column where the
last row has a defined entry. 
The defined entries in the infix of the rows form a double staircase
matrix. It can be broken into staircase matrices by dividing along
alternating rows and columns as shown in Figure~\ref{fig:partial}. 

\begin{figure}[h!]
\centering
   \includegraphics[scale=0.6]{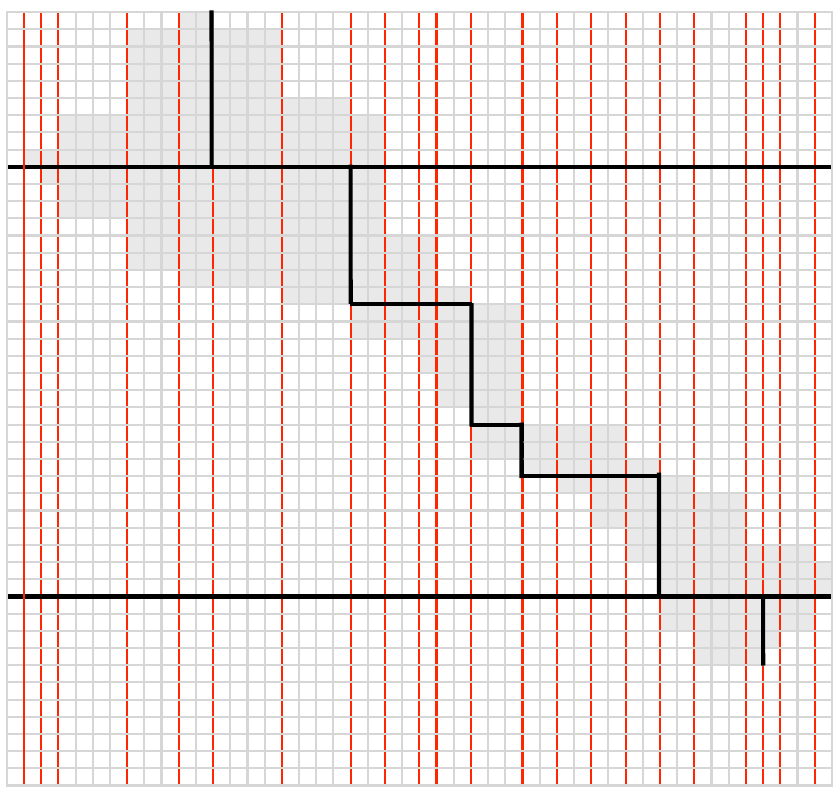}
   \caption{
   A decomposition of a partial matrix (where the defined entries are gray and the undefined white) into staircase matrices
     (defined by solid thick black lines) and into blocks of consecutive
     columns with the same defined entries (indicated by thin vertical
     red lines).\label{fig:partial}}
 \end{figure}

It is easy to verify that, in the resulting decomposition, each row
is covered by at most two staircase matrices, and each column is covered
by at most three staircase matrices.
Also note that  every set of consecutive
columns whose defined elements are in exactly the same set of rows are covered
in this decomposition by the same three row-disjoint staircase matrices. \qed
\end{proof}

Before we use the above decomposition for our data structure, we show how it can be used  to prove that, if $M$ is an $m\times n$ TM (or Monge) staircase
matrix, then the number of breakpoints of $M$ is
$O(m)$. This result illustrates the use of the decomposition, it was used in the data structure of~\cite{ourICALP}, and we believe is of independent interest.

\begin{theorem}\label{lemma:partial-breakpoints}
Let $M$ be a partial $m\times n$ matrix in which the defined entries in each row
and in each column are
contiguous.
If $M$ is TM (i.e., for all $i<j, k<\ell$ where
$M[i,k],M[i,\ell],M[j,k],M[j,\ell]$ are all defined, $M[i,k] \leq M[j,k]
\implies M[i,\ell] \leq M[j,\ell]$), then the number of breakpoints of $M$ is $O(m)$.
\end{theorem}
\begin{proof}
We first show that the number of breakpoints of an $m\times n$ TM {\em staircase} matrix is at
most $2m$.
We focus on the case where the defined entries of all rows begin in the first column and end in non-decreasing columns. In other words, for all $i$, $s_i$=1 and $t_i\le t_{i+1}$. The other cases are symmetric. 

A breakpoint is a situation where the maximum in column $c$ is at row
$r_1$ and the maximum in column $c+1$ is at a different row $r_2$.
We say that $r_1$ is the departure row of the breakpoint, and $r_2$ is
the  entry row of the breakpoint.
There are two types of breakpoints: decreasing ($r_1 < r_2$), and
increasing  ($r_1 > r_2$).
We show that 
(1) each row can be the entry row of at most one decreasing breakpoint, and (2) each row can be the
departure row of at most one increasing breakpoint. 
\begin{enumerate}
\item[(1)] Assume that row $r_2$ is an entry row of two decreasing
  breakpoints:  One is the pair of entries $(r_1,c_1),(r_2,c_1+1)$ and
  the other is the pair  $(r_3,c_2),(r_2,c_2+1)$. We know that
  $r_1<r_2$, $r_3<r_2$, and wlog $c_2>c_1+1$. 
Since the maximum in column $c_1+1$ is in row $r_2$, we have
$M[r_3,c_1+1] < M[r_2,c_1+1]$.
However, since the maximum in column $c_2$ is in row $r_3$, we have
$M[r_3,c_2]  > M[r_2,c_2]$, contradicting the total monotonicity of
$M$. Note that $M[r_2,c_2]$ is defined since $M[r_2,c_2+1]$ is defined.

\item[(2)]
Assume that row $r_1$ is a departure row of two increasing breakpoints:
One is the pair of entries $(r_1,c_1),(r_2,c_1+1)$ and the other is
the pair  $(r_1,c_2),(r_3,c_2+1)$. We know that $r_1>r_2$ and
$r_1>r_3$. 
Since the maximum in column $c_1$ is in row $r_1$, we have
$M[r_2,c_1] < M[r_1,c_1]$.
However, since the maximum in column $c_1+1$ is in row $r_2$, we have
$M[r_2,c_1+1]  > M[r_1,c_1+1]$, contradicting the total monotonicity of
$M$. Note that $M[r_1,c_1+1]$ is defined since $M[r_1,c_2]$ is defined.
\end{enumerate} 

The above two claims prove that the number of breakpoints of a staircase matrix is at
most $2m$.
We use this fact, and the above decomposition to staircase matrices to prove an $O(m)$ bound for arbitrary partial matrices.

Let $bp(M_i)$ denote the number of breakpoints in matrix $M_i$. 
Let $m_i$ denote the number of rows in $M_i$.
Since each row appears in at most two $M_i$s, $\sum_i m_i =
O(m)$.
The total number of breakpoints in all  $M_i$s is
$O(m)$ since 
$\sum_i bp(M_i) = \sum_i O(m_i) = O(m)$.

Consider now a partition of $M$ into rectangular blocks $B_j$ defined by maximal
sets of contiguous columns whose defined entries are at the same set
of rows, see Figure~\ref{fig:partial}. There are $O(m)$ such blocks.
Notice that the number of breakpoints of $M$ is $bp(M) = \sum_j bp(B_j) + O(m)$ (the
$O(m)$ term accounts for the possibility of a new breakpoint between every two
consecutive blocks). Therefore, it suffices to bound $\sum_j bp(B_j)$.

Consider some block $B_j$. As we mentioned above, the columns of $B_j$
appear in at most three row-disjoint staircase matrices $M_1,M_2,M_3,$ in the decomposition of
$M$. The column maxima of $B_j$ are a subset of the column maxima of
$M_1,M_2,M_3$. Assume wlog that the indices of rows covered by $M_i$ are smaller than
those covered by $M_{i+1}$ for every $i=1,2$. 

The breakpoints of $B_j$
are either breakpoints of $M_1,M_2,M_3$, or
breakpoints that occur when the maxima in consecutive columns of $B_j$
originate in different $M_i$. However, since $B_j$ is a (non-partial)  TM matrix, its column maxima are
monotone. So once a column maximum originates in $M_i$, no maximum in
greater columns will ever originate in $M_j$ for $j<i$. It follows
that the number of breakpoints in $B_j$ that are not breakpoints of
$M_1,M_2,M_3$ is at most two. Since there are  $O(m)$ blocks, 
$\sum_j bp(B_j) \leq \sum_i bp(M_i) + O(m) = O(m)$. This completes the proof of Theorem~\ref{lemma:partial-breakpoints}.\qed \end{proof}

\subsection{The data structure}

We begin with a weaker result (Theorem~\ref{thm:staircasesubmatrixlarge}), which is that one can answer submatrix maximum queries on  an $n \times n$ staircase matrix in $O(\log\log n)$ time
with a structure of size $O(n \log n)$. We will then (Theorem~\ref{thm:staircasesubmatrix}) show how to reduce the space to $O(n)$,  and finally (Theorem~\ref{thm:partialsubmatrix}) how to handle arbitrary partial Monge matrices using the decomposition into staircase matrices.

We will need the following preliminary lemma, that follows quite easily from the persistent predecessor structure of Chan~\cite{ChanPersistent}. 

\begin{lemma}
\label{lem:dominancemaximum}
A collection $S$ of $O(n)$ weighted points on an $n\times n$ grid can be preprocessed in $O(n\log\log n)$ time and $O(n)$ space,
so that, given any $(x,y)$, the maximum weight of a point $(x',y')\in S$ such that $x'\geq x$ and $y'\geq y$ can be calculated in
$O(\log\log n)$ time.
\end{lemma}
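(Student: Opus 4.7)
The plan is to reduce 2D dominance maximum queries to persistent 1D successor search by a sweep over $x$. I would sort the points of $S$ in decreasing order of their $x$-coordinate and process them incrementally, producing $|S|+1$ versions of a dynamic structure $D$. At version $t$, the structure $D$ stores the Pareto skyline of the points processed so far, i.e., those pairs $(y,w)$ such that no other processed point has both a $y$-coordinate at least as large and a weight at least as large. When the Pareto points are ordered by decreasing $y$, their weights are strictly increasing. Consequently, the maximum weight among all processed points whose $y$-coordinate is at least $y^{\ast}$ is attained by the Pareto point with smallest $y$-coordinate $\geq y^{\ast}$, and can therefore be retrieved by a single successor query on the $y$-coordinates currently in $D$.

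To maintain $D$ persistently, I would insert each new point $(y,w)$ as follows: first perform a successor query to test whether $(y,w)$ is dominated by an existing Pareto point (if so, no change is made); otherwise, insert $(y,w)$ and then repeatedly delete its predecessors in $y$ as long as their weights are $\leq w$. Every point enters the Pareto set at most once and leaves it at most once, so the total number of insertions and deletions is $O(|S|)$. I would implement $D$ on top of Chan's persistent predecessor structure~\cite{ChanPersistent}, which supports predecessor, successor, insertion and deletion in $O(\log\log n)$ time per operation within $O(|S|)$ total space over the whole sequence of updates. This yields $O(n\log\log n)$ preprocessing time and $O(n)$ total space. In parallel, I would construct a static predecessor structure (in $O(n)$ space and $O(\log\log n)$ query time) over the distinct $x$-coordinates of $S$ so that, for any query $x$, the version $t$ reached after processing exactly the points with $x'\geq x$ can be located quickly.

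To answer a query $(x,y)$, I would first look up the version $t$ via one predecessor query on $x$, then perform one successor query for $y$ in version $t$ of $D$, and return the weight attached to the returned Pareto point (or $-\infty$ if no such point exists). Both steps cost $O(\log\log n)$, matching the claimed query bound.

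The main obstacle is certifying that Chan's structure accommodates the mixed insert/delete workload within linear total space under partial persistence. This is handled by the fact that the number of physical updates is linear in $|S|$ and each update touches only $O(\log\log n)$ cells; the persistence mechanism underlying Chan's structure then keeps the total auxiliary storage linear, so the bounds go through as stated.
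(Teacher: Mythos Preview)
Your proposal is correct and is essentially the paper's own proof: a plane sweep that maintains the Pareto (staircase) skyline in a persistent predecessor structure due to Chan, so that each dominance query becomes one version lookup plus one successor query. The only cosmetic difference is that the paper sweeps over $y$ (storing $x$-coordinates) while you sweep over $x$ (storing $y$-coordinates), which is an immaterial symmetry; your explicit mention of the auxiliary static predecessor structure for mapping the query coordinate to the correct version is a detail the paper leaves implicit.
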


\begin{proof}
We use the standard geometric idea of sweeping the grid with a horizontal line while maintaining a data structure describing the current
situation. The data structure is made partially persistent so that after sweeping, given a query $(x,y)$, we can retrieve the version
of the structure corresponding to a horizontal line passing through $(x,y)$. Querying that version of the data structure will allow us
to answer the request. The data structure will be a predecessor structure made persistent using the result of Chan~\cite{ChanPersistent}.
See Theorem 5 of~\cite{MosheSurvey} 
for a more detailed description of a similar lemma.

Denote the points by $(x_{i},y_{i})$ and their corresponding weights by $w_{i}$. We assume that the weights are distinct.
We sweep the grid with a horizontal line starting at $y=n$. The predecessor structure stores $x$-coordinates of some of the already
seen points. Coordinate $x_{i}$ is stored in the predecessor structure iff $y_{i}\geq y$ and there is no $i'$
such that $y_{i'}\geq y$, $x_{i'}\geq x_{i}$ and $w_{i'}>w_{i}$. This is because otherwise the $i'$-th point is a better answer
than the $i$-th point for any query processed using this or any future version of the data structure.
Consequently, the points whose $x$-coordinates are stored in the predecessor structure can be arranged so that
their $x$-coordinates are increasing and the weights decreasing. Then it follows that locating the maximum weight
of a point $(x',y')\in S$ such that $x' \geq x$ and $y' \geq y$ can be done by finding the successor of $x$ in the
version of the predecessor structure corresponding to $y$. Maintaining the structure while sweeping the grid
is also done with a predecessor search. After having seen a new point $(x_{i},y_{i})$ we locate the predecessor of $x_{i}$.
If the weight of the corresponding point is smaller than $w_{i}$, we remove it from the structure and repeat.

A persistent predecessor search structure can be implemented in space $O(n)$ while keeping the query
time $O(\log\log n)$~\cite{ChanPersistent}. Consequently, we can build in $O(n\log\log n)$ time a~structure of
size $O(n)$ answering queries in $O(\log\log n)$ time. \qed
\end{proof}

\begin{theorem}
\label{thm:staircasesubmatrixlarge}
Given an $n\times n$ staircase Monge matrix $M$, a data structure of size $O(n \log n)$ can be constructed
in $O(n \log n)$ time to answer submatrix maximum queries in $O(\log\log n)$ time.
\end{theorem}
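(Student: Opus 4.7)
My plan is to adapt the framework of Lemma~\ref{lem:submatrix} and Theorem~\ref{thm:submatrix} to staircase matrices, with only local modifications to account for the growing defined region. A first structural observation simplifies the setting: the rows of a staircase matrix split into at most two contiguous blocks, one of L-rows (those with $s_i = 1$) and one of R-rows (those with $t_i = n$). Indeed, if two L-rows were separated by an R-row, column~$1$ would have non-contiguous defined entries, violating the partial Monge structure. The same contiguity argument applied to intermediate columns forces the extents $t_i$ inside the L-block to be monotone, WLOG non-decreasing. Since an arbitrary submatrix query decomposes in $O(1)$ time into one query on each block and the R-block is handled symmetrically, it suffices to handle a single L-block with non-decreasing $t_i$.

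For this L-block I first obtain a subrow-maximum data structure by applying the partial-Monge version of Theorem~\ref{thm:subcolumn2} (the extension is noted in the paragraph preceding the theorem) to the transpose of $M$; this uses $O(n)$ space, $O(n \log n)$ preprocessing, and $O(\log\log n)$ query time. Next, I build the binary row-tree $\mathcal{T}$ of Section~\ref{sec:structure} and repeat the construction of Lemma~\ref{lem:submatrix} at every node $u$, taking $M_u$ to be the sub-L-staircase at $u$ and writing $t^*_u = \max_{i \in M_u} t_i$. The column-argmax $r_u(c)$ is defined only on $c \in [1, t^*_u]$, but on this range it is monotone by the standard Monge exchange argument applied to defined entries only, so breakpoints exist. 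The key combinatorial fact to verify is that the stack-like breakpoint evolution of Lemma~\ref{lem:subcolumn} still holds: appending row $i$ to the prefix $M_{i-1}$ pops several old breakpoints, pushes at most one ``overriding'' breakpoint inside $[1, t_{i-1}]$, and pushes at most one ``extending'' breakpoint at column $t_{i-1}+1$ to cover the newly defined columns $(t_{i-1}, t_i]$. This yields an $O(|M_u|)$-node tree $T_u$ carrying the weighted-ancestor and generalized range-maximum augmentations of Lemma~\ref{lem:submatrix}, with all values $m_k$ computed from the subrow structure above in $O(|M_u|\log|M_u|)$ time.

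To answer a prefix query on $M_u$ with column range $[j_0, j_1]$, I first clamp $j_1$ to $\min(j_1, t^*_u)$; after clamping, the column range lies entirely within columns on which $r_u(\cdot)$ is defined, and the three-case analysis of Lemma~\ref{lem:submatrix} applies verbatim to produce the answer in $O(\log\log n)$ time. Suffix queries are symmetric. Summing over the levels of $\mathcal{T}$ gives $O(n \log n)$ total space and $O(n \log n)$ construction time, and a general query is decomposed at the LCA of $i_0$ and $i_1$ into a prefix plus a suffix query exactly as in Theorem~\ref{thm:submatrix}. The main obstacle is verifying the stack-like update discipline in the presence of an enlarging defined range---specifically, showing that at most two breakpoints are pushed per appended row and that the resulting $T_u$ still has strictly increasing column labels down every root-to-leaf path. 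Once this combinatorial claim is in place, every other ingredient of Section~\ref{sec:structure} transfers without change.
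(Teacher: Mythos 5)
Your route is genuinely different from the paper's. The paper does not reuse the breakpoint machinery of Section~\ref{sec:structure} here at all: it recursively partitions the staircase into at most $2n$ \emph{full} Monge fragments, reduces an arbitrary query in $O(1)$ time to two submatrix queries on a filled-in copy $\widetilde M$ plus one \emph{dominance} maximum query, and answers the dominance query by combining (i) the 2D dominance structure of Lemma~\ref{lem:dominancemaximum} over the fragments fully inside the range, weighted by their maxima, (ii) a query on $\widetilde M$ for the corner fragment, and (iii) precomputed arrays of the $O(\log n)$ fragments cut by each horizontal or vertical line, with suffix maxima; this last component is where the $O(n\log n)$ space goes, and running SMAWK on all fragments gives the $O(n\log n)$ preprocessing. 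You instead push the prefix-breakpoint/weighted-ancestor machinery of Lemmas~\ref{lem:subcolumn} and~\ref{lem:submatrix} directly into the partial setting. Your key combinatorial claim does hold: for an L-type block with $t_i$ non-decreasing, the set of previously defined columns won by the appended row is a suffix of $[1,t_{i-1}]$ (the old champion $r(c')$ is defined at every $c<c'$ because defined entries start at column~$1$, so the exchange argument survives), and the newly defined columns $(t_{i-1},t_i]$ are won by default; in fact only \emph{one} breakpoint is ever pushed, not two, since your ``overriding'' and ``extending'' segments share the same champion. The symmetric suffix-of-rows structure lives on an R-type staircase with non-decreasing left boundary, where the argument also goes through because all earlier rows are defined on the new row's entire column range. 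With the partial-Monge subrow structure, the three-case analysis is then sound.

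Two caveats. First, your preprocessing bound is overstated: by the paper's own accounting, each node of $\mathcal T$ costs $O(|M_u|\log|M_u|)$ for the generalized range-maximum values plus $O(|M_u|\log n)$ for the binary searches that build $T_u$, which sums to $O(n\log^2 n)$ over the tree, not the claimed $O(n\log n)$. (This slack happens to be harmless downstream, since Theorem~\ref{thm:staircasesubmatrix} applies this result only to an $(n/\log n)$-sized matrix, but it does not match the statement as written.) Second, column contiguity alone does not force $t_i$ to be monotone on the L-block --- it forces unimodality --- so your WLOG needs either the intended definition of a staircase as a corner matrix with monotone boundary (which the paper asserts and uses), or one further split of the block into its increasing and decreasing parts. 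As for what each approach buys: yours is more self-contained and avoids the persistent-predecessor dominance structure entirely, while the paper's yields the explicit dominance-query primitive and fragment decomposition that its subsequent linear-space and general-partial-Monge arguments (Theorems~\ref{thm:staircasesubmatrix} and~\ref{thm:partialsubmatrix}) reuse verbatim.
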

\begin{proof}
Because of left-right symmetry, we can assume that the defined entries in row $i$ 
start in the first column and end in column $t_{i}$. Notice that either $t_{1}\le t_{2} \le \ldots \le t_{n}$ or $t_{1}\ge t_{2} \ge \ldots \ge t_{n}$.
Without loss of generality we will assume the latter. This is enough because we will not be explicitly using the Monge property in our
solution, except for applying Theorem~\ref{thm:submatrix2} on a copy of $M$ (called $\widetilde M$) where the undefined entries are
appropriately filled. 
 
We partition $M$ into full Monge matrices using a standard method: First, create a full Monge matrix by taking the upper-left fragment $[1,n/2]\times [1,t_{n/2}]$ of $M$. Then, recursively decompose the staircase  matrices created by taking the upper-right fragment $[1,n/2]\times [t_{n/2}+1,n]$ and the lower-left fragment $[n/2+1,n]\times [1,n]$ of $M$. See Figure~\ref{fig:staircaseDecoposition}. It is easy to verify that
the decomposition consists of at most $2n$ full Monge matrices (called fragments). The decomposition has other useful
properties on which we elaborate further.

\begin{figure}[h]
\begin{center}
\includegraphics[width=1\textwidth]{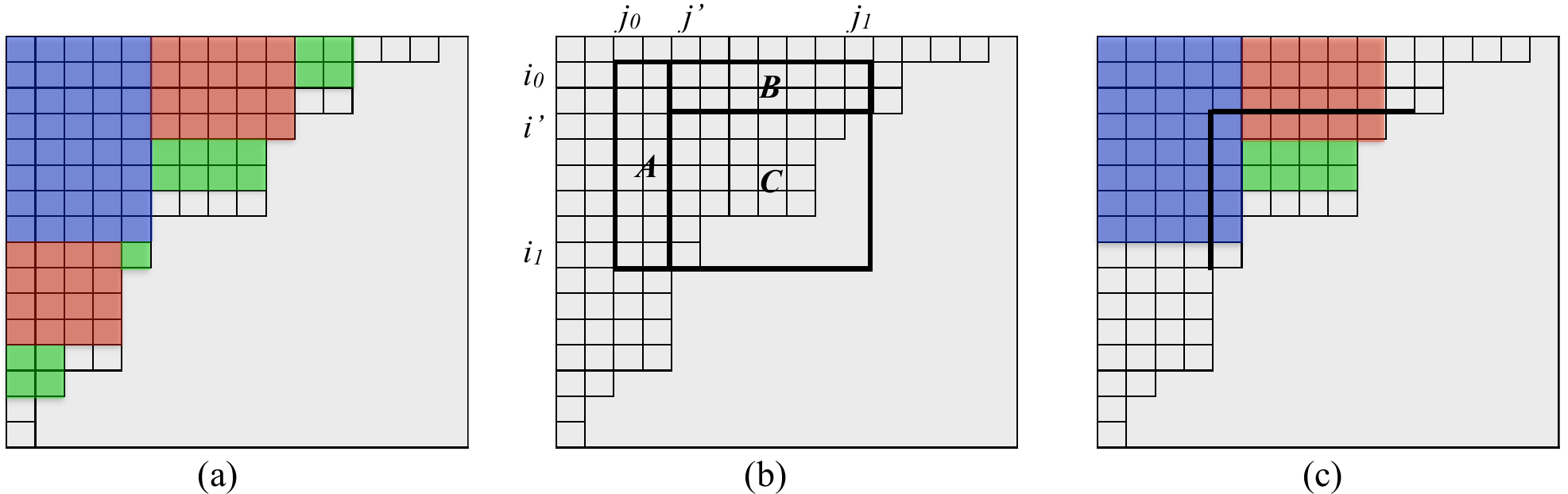}
\end{center}
\caption{(a) A staircase $n\times n$ Monge matrix partitioned into $2n$ smaller full Monge matrices (fragments). (b) A query range $[i_{0},i_{1}]\times  [j_{0},j_{1}]$ decomposed into two full Monge matrices $A$ and $B$ and one dominance query $C$. (c) The dominance query as vertical and horizontal lines (the green fragment is fully inside the range and the blue and red fragment intersect the horizontal line). }
\label{fig:staircaseDecoposition}
\end{figure}

Consider a query range $[i_{0},i_{1}]\times  [j_{0},j_{1}]$.
To find the maximum (defined) $M[i,j]$ over all $i\in [i_{0},i_{1}]$ and $j\in [j_{0},j_{1}]$ 
we proceed as follows. The simple case is when the query range is fully within the defined part of $M$. To handle this case, we apply Theorem~\ref{thm:submatrix2} on a copy of $M$ (denoted $\widetilde M$) where the undefined entries are appropriately (and implicitly) filled using Lemma~\ref{lemma:filltheblanks}. This allows us to do submatrix queries in $O(\log\log n)$ time when the query range is fully defined.
Otherwise, we decompose the query into three parts. 
The first part, which we call a \emph{dominance maximum query}, is to
find the maximum $M[i,j]$ over all $i \geq i'$ and $j\geq j'$, for
$i',j'$ to be defined shortly.
The other two are submatrix maximum queries fully within the defined part of $M$ (and hence can be processed by querying the
structure built for $\widetilde M$ in $O(\log\log n)$ time). The decomposition is performed in $O(1)$ time by setting
$j'=t_{i_{1}+1} +1$ and choosing the smallest $i'\geq i_{0}$ such that $t_{i'} < j_{1}$ (which can be preprocessed for every possible
$j_{1}$ in $O(n)$ space). The two submatrix maximum queries are therefore over the full Monge matrices $[i_{0},i_{1}]\times [j_{0},j'-1]$ and  $[i_{0},i'-1]\times [j',j_{1}]$. Hence,
it is enough to focus on answering dominance maximum queries.

To answer a dominance maximum query (i.e., to find the maximum $M[i,j]$ over all $i \geq i'$ and $j\geq j'$) we use the partition of $M$ into full Monge matrices (fragments). Every such fragment
is either fully inside the query range, fully outside of the query range, or intersected by the query range boundary. 

\paragraph{\bf Fragments inside the query range.}
A fragment $[r_{0},r_{1}]\times [c_{0},c_{1}]$ is fully inside the query range iff
$r_{0} \geq i'$ and $c_{0} \geq j'$. This observation allows us to reduce computing the maximum
over all matrices fully inside the query to the problem defined in Lemma~\ref{lem:dominancemaximum}.
The reduction is simply that for every fragment $[r_{0},r_{1}]\times [c_{0},c_{1}]$ we create
a point $(r_{0},c_{0})$ and set its weight to be the maximum inside the fragment. As a result, we
create at most $O(n)$ points on the $n\times n$ grid. 
Using Theorem~\ref{thm:submatrix} on $\widetilde M$ to create every point separately 
takes total $O(n\log\log n)$ in the preprocessing time, so in $O(n\log\log n)$ time
we can construct a structure of size $O(n)$ answering queries in $O(\log\log n)$ time.

\paragraph{\bf Fragments intersected by the query range.}
We are left only with finding the maximum over all fragments intersected by the boundary of our dominance
maximum query. We partition these fragments into three groups. The first consists of the single fragment
containing $M[i',j']$. The maximum there can be found with a submatrix maximum query
on $\widetilde{M}$ in $O(\log\log n)$ time.
All other fragments intersected by the boundary are either intersected by the horizontal line $y=i'$
or the vertical line $x=j'$, but not both. We show how to find the maximum over all matrices intersected by the horizontal
line $y=i'$ and fully to the right of the vertical line $x=j'$ (the other case is symmetric). 

By the properties of our decomposition
scheme, there are at most $\log n$ fragments intersected by any horizontal line, and
they can be arranged in the natural left-to-right order. For every possible horizontal line,
we store these at most $\log n$ fragments in an array. For every fragment we store the coordinates
of its corresponding submatrix of $M$ and the maximum in all of its entries below the horizontal
line. The array is additionally equipped with the maximum over all maxima in each one of its suffixes.
Such preprocessed data allows us to find the maximum over all fragments intersected by
a horizontal line $y=i'$ and fully on the right of a vertical line $x=j'$ in $O(\log\log n)$ time:
First, we binary search over the array stored for $y=i'$ to locate the leftmost fragment completely on
the right of $x=j'$. Then we return the stored corresponding maximum. Notice that
the binary search also allow us to locate the fragment containing $M[i',j']$. Consequently,
the whole query time is $O(\log\log n)$ using $O(n\log n)$ space for this part
of the implementation. To guarantee $O(n\log n)$ preprocessing time, we  run the 
SMAWK algorithm on every fragment in the decomposition in total $O(n\log n)$ time. This gives us the maximum in every row of every fragment. This is then enough to construct all arrays in $O(n\log n)$ time.
\qed \end{proof}

We now proceed to improving Theorem~\ref{thm:staircasesubmatrixlarge} so that the structure needs just linear space. The main idea
is to partition the $n\times n$ staircase matrix $M$ into cells of size $\log n \times \log n$ and then define a new smaller $(n/\log n) \times (n/\log n)$
staircase matrix $M'$ (whose entries correspond to cell-maxima in $M$) on which we apply Theorem~\ref{thm:staircasesubmatrixlarge}. To implement this idea
we need a number of additional auxiliary data structures, which take $O(n)$ space in total. 
We start with an auxiliary lemma, which will be used to provide constant-time access to entries of $M'$. 

\begin{lemma}
\label{lem:cellmaximum}
Given an $n\times n$ Monge matrix $M$ partitioned into $\log n \times \log n$ cells, a data structure
of size $O(n)$ can be constructed in $O(n\log n)$ time to find the maximum in a given cell in $O(1)$ time.
\end{lemma}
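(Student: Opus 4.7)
The plan is to reduce cell maximum queries to two nested applications of Lemma~\ref{lem:micro}. I would first partition $M$ into $n/\log n$ column strips $M^1,\ldots,M^{n/\log n}$, where $M^k$ consists of the $\log n$ consecutive columns forming the set $C_k$. Each $M^k$ is an $n\times \log n$ Monge matrix, and its transpose is a $(\log n)\times n$ Monge matrix (transposes of Monge matrices are Monge). Applying Lemma~\ref{lem:micro} to each transpose $(M^k)^T$ would give an $O(\log n)$-space, $O(\log^2 n)$-preprocessing data structure supporting $O(1)$ queries for the entire-column maximum of $(M^k)^T$, which is exactly
$$G[r,k]\ :=\ \max_{c\in C_k} M[r,c].$$
Summed over the $n/\log n$ column strips, this costs $O(n)$ space and $O(n\log n)$ preprocessing time, and supplies $O(1)$ random access to every entry of the $n\times (n/\log n)$ matrix $G$.

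The key step is to verify that $G$ itself is Monge, so that Lemma~\ref{lem:micro} can be applied again. For $r<r'$ and $k<k'$ I would let $c_a\in C_k$ attain the maximum defining $G[r',k]$ and $c_b\in C_{k'}$ attain the maximum defining $G[r,k']$. Since $k<k'$, we have $c_a<c_b$, so the Monge inequality for $M$ at the quadruple $(r,c_a),(r',c_a),(r,c_b),(r',c_b)$ gives $M[r,c_a]+M[r',c_b]\ge M[r,c_b]+M[r',c_a]$. Combining with the trivial bounds $G[r,k]\ge M[r,c_a]$ and $G[r',k']\ge M[r',c_b]$ yields $G[r,k]+G[r',k']\ge G[r,k']+G[r',k]$, which is the Monge inequality for $G$. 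This careful choice of the witnesses $c_a,c_b$ is the one nontrivial step of the proof.

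With $G$ Monge and with $O(1)$ access to its entries, I would partition its $n$ rows into $n/\log n$ row strips $G^1,\ldots,G^{n/\log n}$, where $G^i$ consists of the $\log n$ rows $R_i$ and is a $(\log n)\times (n/\log n)$ Monge matrix. A second application of Lemma~\ref{lem:micro}, once per $G^i$, provides $O(1)$ entire-column maximum queries on each $G^i$ at an additional cost of $O(n)$ space and $O(n\log n)$ preprocessing time in total. A cell maximum query for the cell of $M$ at row strip $i$ and column strip $k$ then reduces to the entire-column maximum of $G^i$ at column $k$, which by construction equals $\max_{r\in R_i}\max_{c\in C_k}M[r,c]$, the desired cell maximum. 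The overall bounds match the statement and no additional machinery is required beyond the two layered invocations of Lemma~\ref{lem:micro}.
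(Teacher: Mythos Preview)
Your proof is correct and takes a genuinely different route from the paper's. The paper does not build the intermediate matrix $G$; instead, for each horizontal slice of $\log n$ rows it stores the slice's breakpoints in an atomic heap together with the precomputed subrow maxima $m_i=\max_{j\in[c_{i-1},c_i)}M[r(c_{i-1}),j]$ augmented by a one-dimensional range-maximum structure, so that a cell query is reduced in $O(1)$ time to at most two subrow maxima; it then repeats the construction on the transpose, reducing those to at most two columns and hence to at most four explicit entries of $M$. Your approach is more modular: it is simply two layered invocations of Lemma~\ref{lem:micro} linked by the Monge verification for $G$ (the same verification the paper carries out, with essentially the same witness argument, in the proof of Theorem~\ref{thm:subcolumn2}). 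The paper's argument avoids constructing $G$ and reasoning about its Monge property, at the cost of setting up the extra per-slice range-maximum scaffolding and the transpose pass; your argument reuses a single black box twice and is arguably cleaner. Both achieve the stated $O(n)$ space, $O(n\log n)$ preprocessing, and $O(1)$ query time.
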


\begin{proof}
We partition $M$ into $n/\log n$ horizontal slices, each consisting of $\log n$ rows (and all columns). Consider a single slice,
which is a $\log n \times n $ Monge matrix.
We store its breakpoints $c_{1}<c_{2}<\ldots < c_{k}$ (where $k \le \log n$) in an atomic heap,
consequently allowing predecessor queries in $O(1)$ time (this is exactly how the structure from
Corollary~\ref{lem:micro} works). Additionally, similarly to Lemma~\ref{lem:submatrix}, for every $i\geq 2$
we precompute the value of
$$m_{i}=\max_{j\in [c_{i-1},c_{i})} M[r(c_{i-1}),j]$$
and augment these values with a (one dimensional) range maximum data structure. Here, $r(c)$ denotes the row containing
the maximum element in the $c$-th column of the slice in question. 
Using two predecessor queries
and one range maximum query, the problem of finding the maximum in a given cell
(which is fully contained in a single horizontal slice) reduces in $O(1)$ time to finding the maximum in at most two rows.
The total space is $O(n/\log n \cdot \log n)=O(n)$ and the bottleneck
in the preprocessing is computing the breakpoints for all slices.
The breakpoints of a single slice can be computed in $O(\log ^2 n)$ by
adding one row at a time, as done in the proof of
Lemma~\ref{lem:subcolumn}. In total, this  takes $O(n/\log n \cdot \log^2
n)=O(n\log n)$ total time. 

We repeat the above reasoning on the transpose of $M$. As a result, we either already know the maximum
element, or we have isolated at most two rows and at most two columns, such that the maximum lies in one of these rows and one of these columns. This gives us at most four candidates
for the maximum, which can be retrieved and compared naively. \qed
\end{proof}

\noindent We are now ready to present our linear-space improvement to  Theorem~\ref{thm:staircasesubmatrixlarge}.

\begin{theorem}
\label{thm:staircasesubmatrix}
Given an $n\times n$ staircase Monge matrix $M$, a data structure of size $O(n)$ can be constructed
in $O(n\log n)$ time to answer submatrix maximum queries in $O(\log\log n)$ time.
\end{theorem}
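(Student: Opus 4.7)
The plan follows the blueprint sketched in the paragraph preceding the theorem: partition $M$ into $\log n\times\log n$ cells, form the coarsened $(n/\log n)\times(n/\log n)$ staircase Monge matrix $M'$ of cell maxima, invoke Theorem~\ref{thm:staircasesubmatrixlarge} on $M'$, and add a family of boundary auxiliary structures of total size $O(n)$. First I would verify that $M'$ is a staircase Monge matrix: the Monge inequality is inherited from $M$ by the same $2\times 2$-submatrix argument used for the cell-maxima matrix in the proof of Theorem~\ref{thm:subcolumn2}, and the staircase shape is inherited from the monotonicity $t_1\ge t_2\ge\cdots\ge t_n$. Lemma~\ref{lem:cellmaximum} then gives $O(1)$-time access to entries of $M'$ using $O(n)$ space, so Theorem~\ref{thm:staircasesubmatrixlarge} applied to $M'$ yields an $O((n/\log n)\log(n/\log n))=O(n)$-space structure that answers submatrix maximum queries on $M'$ in $O(\log\log n)$ time, covering the part of any query that comes from cells of $M$ entirely inside the query rectangle.

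Next I would build the boundary auxiliaries. Applying Theorem~\ref{thm:subcolumn2} to $M$ and to its transpose (both of which extend to staircase matrices by the opening paragraph of Section~\ref{sec:partial}) I get $O(n)$-space subcolumn- and subrow-maximum structures on $M$ with $O(\log\log n)$ query time. For every horizontal cell-slice of $\log n$ rows I apply Lemma~\ref{lem:submatrix} and its symmetric variant, using the subrow structure, to obtain a data structure of $O(\log n)$ size per slice that answers submatrix queries whose row range is a prefix or suffix of the slice and whose column range is arbitrary; summed over the $n/\log n$ cell-slices this is $O(n)$ space. Symmetric $O(\log n)$-per-vertical-slice structures are built for queries with prefix or suffix column range and arbitrary row range, for another $O(n)$ space.

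A query $[i_0,i_1]\times[j_0,j_1]$ fully contained in the defined part of $M$ is answered by decomposing its region into at most five pieces: an interior of entirely contained cells (answered by the $M'$ structure); top and bottom horizontal strips of height $<\log n$ absorbing the four corner cells (answered by the cell-slice structures); and left and right vertical strips of width $<\log n$ with the corners removed (answered by the vertical-slice structures). Each piece is answered in $O(\log\log n)$ time, so the whole query takes $O(\log\log n)$. A query that crosses the staircase boundary of $M$ is first split, exactly as in the proof of Theorem~\ref{thm:staircasesubmatrixlarge}, into two fully-defined submatrix subqueries plus one dominance subquery; the fully-defined subqueries use the five-piece decomposition above, while the dominance subquery is handled by Lemma~\ref{lem:dominancemaximum} on cell maxima for cells entirely inside the dominance region, combined with the cell-slice and vertical-slice auxiliaries for cells crossed by the dominance boundary (this replaces the $O(n\log n)$ per-horizontal-line arrays of Theorem~\ref{thm:staircasesubmatrixlarge}).

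The main obstacle is that a cell-slice is itself a partial (staircase) matrix and Lemma~\ref{lem:submatrix} is stated for full Monge matrices. I would address this by invoking the sentinel-virtualization trick from the start of Section~\ref{sec:partial}: once the outer decomposition guarantees that the subquery reaching a cell-slice has row and column ranges lying inside the defined part of $M$, its restriction to the slice is also fully defined, and replacing the slice's undefined entries by appropriate sentinels turns the slice into a virtual $\log n\times n$ full Monge matrix on which Lemma~\ref{lem:submatrix} answers the subquery correctly. The remaining bookkeeping --- proving $M'$'s Monge property under the chosen convention for undefined cells, and verifying that the chosen $i',j'$ in the dominance split ensure fully-defined cell-slice subqueries --- is technical but follows by careful case analysis analogous to the arguments in Theorem~\ref{thm:staircasesubmatrixlarge}.
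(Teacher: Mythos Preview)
Your five-piece decomposition for queries that are entirely inside the defined region of $M$ is a perfectly valid linear-space alternative to what the paper does, and it is in fact cleaner: per-slice instances of Lemma~\ref{lem:submatrix} (horizontal and vertical) plus the $M'$ structure cover such queries with no difficulty.

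The gap is in the dominance subquery. A dominance region $\{(i,j):i\ge i',\,j\ge j'\}$ necessarily runs all the way out to the staircase boundary of $M$, and therefore always contains \emph{partially defined} cells --- the $O(n/\log n)$ cells through which the staircase edge passes. Your three ingredients (Lemma~\ref{lem:dominancemaximum} on cell maxima; horizontal and vertical cell-slice auxiliaries) do not cover these. Even if you include the partially defined cells as additional points in Lemma~\ref{lem:dominancemaximum}, the cells crossed by the line $y=i'$ inside the horizontal slice containing $i'$ extend past the last fully defined column of that slice, so the subquery you send to the slice structure is not fully defined and your own sentinel caveat (``once the outer decomposition guarantees that the subquery $\ldots$ lies inside the defined part'') fails exactly here. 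A single horizontal slice can contain up to $\Theta(n/\log n)$ partially defined cells, so you cannot simply scan them.

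The paper closes this gap with two extra ideas that your proposal is missing: it places the $O(n/\log n)$ partially defined cells in a linear order, equips them with a one-dimensional range-maximum structure and with precomputed ``tail'' maxima at every point of their left boundaries (total $O(n)$ space); and it handles the one partially defined cell hit by the query line via \emph{recursion} --- that cell is itself a $\log n\times\log n$ staircase matrix, so the whole construction is reapplied at scales $\log n$, then $\log\log n$, then $\log\log\log n$, after which a brute-force scan finishes in $O((\log\log\log n)^2)=O(\log\log n)$ time. Without an analogue of these two pieces, your dominance handling is incomplete.
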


\begin{proof}
As in the proof of Theorem~\ref{thm:staircasesubmatrixlarge}, we can assume that the defined entries in row $i$ 
start in the first column and end in column $t_{i}$, and that  $t_{1}\ge t_{2} \ge \ldots \ge t_{n}$.

We partition $M$ into cells of size $\log n \times \log n$ and then define a  smaller  $(n/\log n) \times (n/\log n)$
staircase matrix $M'$. Notice that, unlike Lemma~\ref{lem:cellmaximum}, $M$ is a staircase Monge matrix (and not a full Monge matrix). This means that there are three types of cells in $M$: fully defined, partially defined, and fully undefined. An entry of $M'$ is defined iff its corresponding cell in $M$ is
fully defined. In this case the entry is equal to the maximum in the corresponding
cell. The undefined entries of $M'$ are the ones corresponding to either partially defined or fully undefined cells of $M$. 
We appropriately (and implicitly)  fill these entries using Lemma~\ref{lemma:filltheblanks} to turn $M$' into a full Monge
matrix $\widetilde M'$, on which we apply Lemma~\ref{lem:cellmaximum}. This gives us
constant-time access to the entries of $M'$, so finally we can apply Theorem~\ref{thm:staircasesubmatrixlarge}
to preprocess it in $O(n)$ space and $O(n\log n)$ time to answer submatrix maximum queries
in $O(\log\log n)$ time.

Regarding partially defined
cells, we observe that there are at most $2n/\log n$ of them. Furthermore, they can be arranged in a linear order, so that if the part of $M$
corresponding to the $i$-th partially defined cell is $[r_{i},r'_{i}] \times [c_{i},c'_{i}]$, then
for all $i$ either $[r_{i},r'_{i}]=[r_{i+1},r'_{i+1}]$ and $c'_{i}+1=c_{i+1}$ or $r_{i}=r'_{i+1}+1$ and $[c_{i},c'_{i}]=[c_{i+1},c'_{i+1}]$
(to be more precise, we might need to declare some fully defined cells partially defined to
guarantee this property). We create a predecessor structure storing all $r_{i}$s and a separate
predecessor structure storing all $c_{i}$s. We also compute the maximum in every partially
defined cell and store them in an array (arranged in the aforementioned linear order) augmented
with a (one dimensional) range maximum structure. Computing the maximum in all partially defined
cells is done in $O(n/\log n \cdot \log n \cdot \alpha(\log n))=O(n\cdot \alpha(\log n))$ time using~\cite{KK89}.

By the same reasoning given in the proof of Theorem~\ref{thm:staircasesubmatrixlarge}, it is enough
to implement dominance maximum queries on $M$. A dominance maximum query can be
decomposed into (i) a dominance maximum query on $\widetilde M'$, which can be answered
in $O(\log\log n)$ time, (ii) finding the maximum inside all partially defined cells fully within
the query range, and (iii) finding the maximum inside partially defined cells intersected by the
boundaries of the query range. All partially defined cells fully within the query range create
a contiguous interval in the linear order. The range can be determined in $O(\log\log n)$
using the predecessor structures storing all $r_{i}$s and $c_{i}$s, and then the maximum can
be found in $O(1)$ time with a (one dimensional) range maximum query. It remains to calculate the maximum inside partially defined cells
intersected by the boundaries of the query range. We will describe how to process all partially
defined cells intersected by the horizontal boundary. Handling the vertical boundary is symmetric.

Let the dominance maximum query be specified by $(i',j')$. We want to compute the maximum
inside the query range and belonging to a partially defined cell intersected by the horizontal line $y=i'$.
All such cells create a contiguous interval in the linear order, which can be determined
with two predecessor queries in $O(\log\log n)$ time. In the same complexity, we can find
the leftmost such cell $u$ which is not fully on the left of the vertical line $x=j'$. We decompose
the original query into a dominance maximum query inside $u$, and the remaining
part. The remaining part starts at a left boundary of a partially defined cell and consists
of the entries at or below $y=i'$ in all partially defined cells to
the right of $u$.
Consequently, the answer can be preprocessed
for every point on a left boundary of a partially defined cell using $O(n/\log n \cdot \log n)=O(n)$
space and $O(n/\log n \cdot \log n \cdot \alpha(\log n))=O(n\cdot \alpha(\log n))$ time using~\cite{KK89}.
The bottleneck in the preprocessing is computing the maximum in every row of every
partially defined cell.

It remains to describe how to handle the dominance query in $u$. In
other words, after constructing in $O(n\log n)$ time an $O(n)$ size
structure, we have,  in $O(\log\log n)$ time, 
reduced an arbitrary dominance maximum query into a dominance maximum query inside
a single partially defined cell. This cell is a smaller $\log n \times \log n$ staircase
 matrix, and furthermore there are at most $2 n/\log n$ such cells. By recursing on each of
 these smaller staircase matrices separately, we construct in
 additional $O(n/\log n \cdot \log n \log\log n) = O(n \log\log n)$ time an
 $O(n/\log n \cdot \log n) = O(n)$ size structure, which reduces the
 original dominance query, in
 additional $O(\log\log\log n)$ time, into 
 a dominance maximum query inside one of $O(n/\log n \cdot \log n/\log\log n)=O(n/\log\log n)$ tiny $\log\log n \times \log\log n$
 staircase matrix (each of them being a submatrix of the original $M$). By recursing again
 on every tiny staircase matrix separately,
 we construct in additional $O(n\log \log \log n)$ time an $O(n)$ size
 structure, which reduces the original arbitrary dominance query in
 additional $O(\log\log\log\log n)$  time  into a dominance maximum query inside 
 an $(\log\log\log n)\times (\log\log\log n)$ submatrix of $M$. Such dominance maximum
 query can be answered naively resulting in $O(\log\log n+(\log\log\log n)^{2})=O(\log\log n)$ total query time.
 \qed \end{proof}

We are now ready to prove the main theorem of this section, which is that using Theorem~\ref{thm:staircasesubmatrix} we can actually implement submatrix maximum queries on arbitrary (and not just staircase) partial Monge
matrices. The idea is to partition the partial Monge matrix into
staircase matrices, so that each row and each column belong to $O(1)$
staircase matrices.  Such partitioning was used in~\cite{AggarwalK90,ourICALP} . We build the data structure of
Theorem~\ref{thm:staircasesubmatrix} on each staircase matrix in the
decomposition, and build an additional data structure for queries
spanning more than one staircase matrix.

\begin{theorem}
\label{thm:partialsubmatrix}
Given an $n\times n$ partial Monge matrix $M$, a data structure of size $O(n)$ can be constructed
in $O(n\log n)$ time to answer submatrix maximum queries in $O(\log\log n)$ time.
\end{theorem}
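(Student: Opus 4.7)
The plan is to reduce the partial Monge problem to the staircase Monge problem by decomposing $M$ into $O(n)$ staircase Monge pieces $S_{1},\ldots,S_{k}$ using the partition of~\cite{AggarwalK90} (also used in~\cite{ourICALP}). This decomposition has two key properties: the defined entries of $M$ are covered by the union of the entries of the $S_{t}$'s, and each row and each column of $M$ belongs to only $O(1)$ pieces. Invoking Theorem~\ref{thm:staircasesubmatrix} on every $S_{t}$ therefore uses $O(n)$ total space and $O(n\log n)$ total preprocessing time, because the total number of rows and columns summed over all pieces is $O(n)$, and Theorem~\ref{thm:staircasesubmatrix}'s cost on a single piece is linear in its size up to a $\log n$ factor in time.

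Given a query $[i_{0},i_{1}]\times[j_{0},j_{1}]$, I would classify each piece $S_{t}$ by how its bounding rectangle relates to the query rectangle: either (a) disjoint from the query, (b) intersected by at least one of the four boundary lines $y=i_{0}$, $y=i_{1}$, $x=j_{0}$, $x=j_{1}$, or (c) fully contained in the query. Pieces in category (a) contribute nothing. Category (b) contains only $O(1)$ pieces because each boundary row or column belongs to $O(1)$ pieces by the decomposition property; each such piece can be queried in $O(\log\log n)$ time by its own Theorem~\ref{thm:staircasesubmatrix} structure. The nontrivial task is category (c), where we must find the largest precomputed maximum among all pieces whose bounding rectangle lies inside the query rectangle.

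To handle category (c), I would represent every $S_{t}$ by the bounding rectangle of its defined entries together with its precomputed global maximum $w_{t}$. The $O(1)$ boundary cuts identified in step two split the original query into a constant number of sub-queries, and the idea is that within each such sub-query the containment condition ``the bounding rectangle of $S_{t}$ lies inside the sub-query'' collapses from a four-dimensional condition into a two-dimensional dominance condition: two of the four bounding-box inequalities become automatically satisfied because of how the Aggarwal--Klawe pieces align along a staircase boundary of the partial Monge region, leaving only a dominance condition on the remaining two coordinates. Each resulting 2D dominance maximum query is then answered in $O(\log\log n)$ time within $O(n)$ space using Lemma~\ref{lem:dominancemaximum}.

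The final answer is the maximum of the $O(1)$ category-(b) submatrix queries and the $O(1)$ category-(c) dominance-maximum queries, giving $O(\log\log n)$ total query time and $O(n)$ total space. The main obstacle will be the third paragraph: carefully proving that aligning the sub-query boundaries with the staircase seams of the Aggarwal--Klawe decomposition really does trivialize two of the four bounding-box constraints per sub-query, so that the containment condition for category (c) reduces to a constant number of 2D dominance queries rather than a genuine 4D range problem. This is exactly the place where the specific geometry of the decomposition (as opposed to merely its counting properties) must be exploited.
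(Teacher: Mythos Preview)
Your overall framework matches the paper's: the same Aggarwal--Klawe decomposition into staircase pieces, per-piece structures via Theorem~\ref{thm:staircasesubmatrix}, and the same trichotomy into disjoint, boundary-crossing, and fully-contained pieces, with the boundary pieces bounded by $O(1)$ via the row/column covering property.

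The difference is in category~(c), and here the paper's argument is strictly simpler than yours. You propose to reduce the ``fully contained'' test from a $4$-sided containment condition to a $2$-sided dominance condition and then invoke Lemma~\ref{lem:dominancemaximum}; you correctly flag this reduction as the main obstacle and do not carry it out. The paper sidesteps it entirely by using a stronger structural fact about the decomposition: apart from the $O(1)$ pieces in the first and third slices, the remaining pieces split into two \emph{collections}, and within each collection the pieces are pairwise row-disjoint and column-disjoint with \emph{consecutive} row intervals $[r_{1},r_{2}),[r_{2},r_{3}),\ldots$ and column intervals $[c_{1},c_{2}),[c_{2},c_{3}),\ldots$. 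In other words, the pieces of a collection sit along a diagonal and are indexed by a single integer $k$. After two predecessor searches (one among the $r_{k}$'s, one among the $c_{k}$'s) to locate the four boundary pieces, the fully-contained pieces of that collection form a contiguous interval of indices, so a one-dimensional range-maximum query over the precomputed piece maxima finishes the job. Thus the obstacle you identify dissolves: the geometry of the decomposition is not merely ``each row/column in $O(1)$ pieces'' but a genuine $1$D linear order, making Lemma~\ref{lem:dominancemaximum} unnecessary here.
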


\begin{proof}
We partition $M$ into staircase matrices as done in the proof of Lemma~\ref{lemma:decomposition} (depicted in
Figure~\ref{fig:partial}). 
Recall that the partition divides the rows of $M$ into three ranges. The first range
contributes two staircase matrices, the second range creates a double staircase
matrix, which is further broken into multiple staircase matrices, and the third range
contributes two staircase matrices. 
It is easy to verify that, in the resulting decomposition, each row
is covered by at most two staircase matrices, and each column is covered
by at most three staircase matrices.
Additionally, the staircase matrices contributed by the second range
can be partitioned into two \emph{collections},
such that any two matrices in the same collection are row-disjoint and
column-disjoint.

The data structure consists of the following components. We apply Theorem~\ref{thm:staircasesubmatrix} on every staircase  matrix in our partition. We also store additional
data for both collections. By left-right symmetry, we can assume that
the ranges of rows and columns of the matrices in the collection
are $[r_{1},r_{2}), [r_{2},r_{3}),\ldots$ and $[c_{1},c_{2}),[c_{2},c_{3}),\ldots$, respectively.
We create a predecessor structure storing all $r_{i}$'s and a separate predecessor structure
storing all $c_{i}$'s. We also compute and store the maximum inside every staircase 
matrix in the collection (this is done in total $O(n\cdot \alpha(n))$ time using the algorithm of Klawe and Kleitman~\cite{KK89}), and augment these maxima with a (one dimensional) range maximum structure.

Now consider a submatrix maximum query  $[i_{0},i_{1}]\times [j_{0},j_{1}]$. We first query the $O(1)$ structures built for
the staircase matrices corresponding to the first and the third range.
Next, we consider each of the two collections separately. To find the maximum $M[i,j]$ over
all $i\in [i_{0},i_{1}]$ and $j\in [j_{0},j_{1}]$, we use the predecessor structures
to determine in $O(\log \log n)$ the following values (without loss of generality,
they all exist):
\begin{enumerate}
\item $i'_{0}$ such that $i_{0}\in [r_{i'_{0}},r_{i'_{0}+1})$,
\item $i'_{1}$ such that $i_{1}\in [r_{i'_{1}},r_{i'_{1}+1})$,
\item $j'_{0}$ such that $j_{0}\in [c_{j'_{0}},c_{j'_{0}+1})$,
\item $j'_{1}$ such that $j_{1}\in [c_{j'_{1}},c_{j'_{1}+1})$.
\end{enumerate}
We then query the structures built for the $(i'_{0})$-th, $(i'_{1})$-th,
$(j'_{0})$-th, and $(j'_{1})$-th staircase matrix in the collection.
Now either we have already found the maximum, or it belongs
to one of the staircase matrices fully contained in the query range. Consequently, the maximum
can be found in $O(1)$ time with a single (one dimensional) range maximum query.
\qed \end{proof}

\bibliographystyle{plain}

\begin{thebibliography}{10}

\bibitem{AggarwalK90}
A.~Aggarwal and M.~Klawe.
\newblock Applications of generalized matrix searching to geometric algorithms.
\newblock {\em Discrete Appl. Math.}, 27:3--23, 1990.

\bibitem{SMAWK}
A.~Aggarwal, M.~M. Klawe, S.~Moran, P.~Shor, and R.~Wilber.
\newblock Geometric applications of a matrix-searching algorithm.
\newblock {\em Algorithmica}, 2(1):195--208, 1987.

\bibitem{AlstrupEtAl}
S.~Alstrup, G.~S. Brodal, and T.~Rauhe.
\newblock New data structures for orthogonal range searching.
\newblock In {\em 41st FOCS}, pages 198--207, 2000.

\bibitem{AFL07}
A.~Amir, J.~Fischer, and M.~Lewenstein.
\newblock Two-dimensional range minimum queries.
\newblock In {\em 18th CPM}, pages 286--294, 2007.

\bibitem{BrodalESA}
G.~S. Brodal, P.~Davoodi, M.~Lewenstein, R.~Raman, and S.~S. Rao.
\newblock Two dimensional range minimum queries and {F}ibonacci lattices.
\newblock In {\em 20th ESA}, pages 217--228, 2012.

\bibitem{BrodalDR10}
G.~S. Brodal, P.~Davoodi, and S.~S. Rao.
\newblock On space efficient two dimensional range minimum data structures.
\newblock In {\em 18th ESA}, pages 171--182, 2010.

\bibitem{BKR96}
R.~E. Burkard, B.~Klinz, and R.~Rudolf.
\newblock Perspectives of {M}onge properties in optimization.
\newblock {\em Discrete Appl. Math.}, 70:95--161, 1996.

\bibitem{ChanPersistent}
T.~M. Chan.
\newblock Persistent predecessor search and orthogonal point location on the
  word {RAM}.
\newblock {\em ACM Trans. Algorithms}, 9(3):22:1--22:22, 2013.

\bibitem{Patrascu}
T.~M. Chan, K.~G. Larsen, and M.~P{\v a}tra{\c s}cu.
\newblock Orthogonal range searching on the {RAM}, revisited.
\newblock In {\em 27th SOCG}, pages 354--363, 2011.

\bibitem{Chazelle88}
B.~Chazelle.
\newblock A functional approach to data structures and its use in
  multidimensional searching.
\newblock {\em SIAM Journal on Computing}, 17:427--462, 1988.

\bibitem{CR1989}
B.~Chazelle and B.~Rosenberg.
\newblock Computing partial sums in multidimensional arrays.
\newblock In {\em 5th SOCG}, pages 131--139, 1989.

\bibitem{DemaineLandauWeimann}
E.~D. Demaine, G.~M. Landau, and O.~Weimann.
\newblock On {C}artesian trees and range minimum queries.
\newblock {\em Algorithmica}, 68(3):610--625, 2014.

\bibitem{Munro}
A.~Farzan, J.~I. Munro, and R.~Raman.
\newblock Succinct indices for range queries with applications to orthogonal
  range maxima.
\newblock In {\em 39th ICALP}, pages 327--338, 2012.

\bibitem{FredmanW94}
M.L. Fredman and D.E. Willard.
\newblock Trans-dichotomous algorithms for minimum spanning trees and shortest
  paths.
\newblock {\em J. Comput. Syst. Sci.}, 48(3):533--551, 1994.

\bibitem{GBT84}
H.~Gabow, J.~L. Bentley, and R.E Tarjan.
\newblock Scaling and related techniques for geometry problems.
\newblock In {\em 16th STOC}, pages 135--143, 1984.

\bibitem{GawrychowskiLN14}
P.~Gawrychowski, M.~Lewenstein, and P.~K. Nicholson.
\newblock Weighted ancestors in suffix trees.
\newblock In {\em 22th ESA}, pages 455--466, 2014.

\bibitem{ourICALP}
P.~Gawrychowski, S.~Mozes, and O.~Weimann.
\newblock Improved submatrix maximum queries in {M}onge matrices.
\newblock In {\em 41st ICALP}, pages 525--537, 2014.

\bibitem{KaplanMozesNussbaumSharir}
H.~Kaplan, S.~Mozes, Y.~Nussbaum, and M.~Sharir.
\newblock Submatrix maximum queries in {M}onge matrices and {M}onge partial
  matrices, and their applications.
\newblock In {\em 23rd SODA}, pages 338--355, 2012.

\bibitem{KK89}
M.~M. Klawe and D~J. Kleitman.
\newblock An almost linear time algorithm for generalized matrix searching.
\newblock {\em SIAM Journal Discret. Math.}, 3(1):81--97, 1990.

\bibitem{Kopelowitz}
T.~Kopelowitz and M.~Lewenstein.
\newblock Dynamic weighted ancestors.
\newblock In {\em 18th SODA}, pages 565--574, 2007.

\bibitem{MosheSurvey}
M.~Lewenstein.
\newblock Orthogonal range searching for text indexing.
\newblock In {\em Space-Efficient Data Structures, Streams, and Algorithms -
  Papers in Honor of J. Ian Munro on the Occasion of His 66th Birthday}, volume
  8066, pages 267--302. Springer, 2013.

\bibitem{Nekrich}
Y.~Nekrich.
\newblock Orthogonal range searching in linear and almost-linear space.
\newblock {\em Comput. Geom.}, 42(4):342--351, 2009.

\bibitem{PT2006}
M.~P{\v{a}}tra{\c{s}}cu and M.~Thorup.
\newblock Time-space trade-offs for predecessor search.
\newblock In {\em 38th STOC}, pages 232--240, 2006.

\bibitem{YuanA10}
H.~Yuan and M.~J. Atallah.
\newblock Data structures for range minimum queries in multidimensional arrays.
\newblock In {\em 21st SODA}, pages 150--160, 2010.

\end{thebibliography}

\end{document}